%% file: TopoSortCausalDisco.tex
\documentclass{article}

\usepackage[preprint,nonatbib]{neurips_2026}
\input{chapters/preamble.tex}

\title{A Topological Sorting Criterion for Random Causal Directed Acyclic Graphs}

\author{%
    Alexander G.\ Reisach\\
    Laboratoire de Mathématiques d'Orsay\\
    Université Paris-Saclay\\
    91400 Orsay, France
    \And
    Antoine Chambaz \\
    $\phantom{blablablabbla}$CNRS, MAP5$\phantom{blablablablba}$ \\
    Université Paris Cité\\
    F-75006 Paris, France
    \AND
    Gilles Blanchard\thanks{Equal senior author contribution.}\\
    Laboratoire de Mathématiques d'Orsay\\
    Université Paris-Saclay\\
    91400 Orsay, France
    \And
    Sebastian Weichwald$^*$\\
    Department of Mathematical Sciences and\\
    Pioneer Centre for Artificial Intelligence,\\
    University of Copenhagen, Denmark
}

\begin{document}

\maketitle

\begin{refsection}
\input{chapters/abstract.tex}
\input{chapters/introduction.tex}
\input{chapters/theory.tex}
\input{chapters/results.tex}
\input{chapters/MEC_size.tex}
\input{chapters/time_DAGs.tex}
\input{chapters/discussion.tex}
\input{chapters/conclusion.tex}
\ack
Alexander G.\ Reisach and Gilles Blanchard are supported by the ANR grant BISCOTTE (ANR-19-CHIA-0021).
\printbibliography
\end{refsection}

\begin{refsection}
\appendix
\crefalias{section}{appendix}
\input{chapters/appendix.tex}
\printbibliography[title={Appendix References}]
\end{refsection}

\end{document}

%% file: chapters/preamble.tex
\usepackage{amsmath}
\usepackage{amssymb}
\usepackage{amsthm}
\usepackage{mathtools}
\usepackage{graphicx}
\usepackage{xspace}
\usepackage{centernot}
\usepackage{graphicx}
\usepackage{float}
\usepackage{subcaption}
\usepackage[ruled,vlined]{algorithm2e}
\usepackage{xcolor}
\usepackage{microtype}
\usepackage{booktabs}       %
\usepackage{amsfonts}       %
\usepackage{nicefrac}       %
\usepackage{blkarray}
\definecolor{linkcolor}{HTML}{449B52}
\definecolor{citecolor}{HTML}{67001F}
\definecolor{urlcolor}{HTML}{008080}
\usepackage[breaklinks,colorlinks=true,
            citecolor=citecolor,urlcolor=urlcolor,linkcolor=linkcolor]{hyperref}
\usepackage[capitalize,noabbrev]{cleveref}
\usepackage{tikz}
\usetikzlibrary{decorations.pathreplacing}
\usepackage[
    backend=biber,
    uniquename=false,
    bibencoding=utf8,
    sorting=nty,
    doi=false,isbn=false,url=false,
    citestyle=authoryear,
    maxcitenames=1,
    uniquelist=false,
    maxbibnames=123,
    backref=true,
    hyperref=true
]{biblatex}
\DeclareFieldFormat[article,inproceedings,manual,book,misc,thesis,incollection]{title}{%
  \iffieldundef{url}{#1}{\href{\thefield{url}}{#1}}}

\let\cite\textcite

\renewcommand{\Pr}{\mathbb{P}} 
\renewcommand{\P}{\mathcal{P}}

\newcommand{\G}{\mathcal{G}}

\newcommand{\V}{\mathcal{V}}
\newcommand{\E}{\mathcal{E}}

\newcommand{\D}{\mathcal{D}}
\newcommand{\C}{\mathcal{C}}
\newcommand{\U}{\mathcal{U}}
\newcommand{\R}{\mathbb{R}}
\newcommand{\N}{\mathbb{N}}
\newcommand{\X}{\mathbf{X}}
\newcommand{\s}{\mathbf{s}}
\newcommand{\argmin}{\mathop{\arg\min}}
\newcommand{\ERname}{Erd\H{o}s-R\'enyi\xspace}

\renewcommand{\v}{\boldsymbol{\mathrm{v}}}
\newcommand{\roots}{\operatorname{roots}}
\newcommand{\anc}{\operatorname{anc}}
\newcommand{\desc}{\operatorname{desc}}
\newcommand{\pa}{\operatorname{pa}}
\newcommand{\ch}{\operatorname{ch}}
\newcommand{\rel}{\operatorname{rel}}

\newcommand{\indep}{\perp\!\!\!\perp}
\newcommand{\nindep}{\centernot{\indep}}

\newtheorem{lemma}{Lemma}
\newtheorem{proposition}{Proposition}
\newtheorem{theorem}{Theorem}
\newtheorem{corollary}{Corollary}

%% file: chapters/abstract.tex
\begin{abstract}
    \noindent
    Random directed acyclic graphs (DAGs) based on imposing an order on \ERname and scale free graphs are widely used for evaluating causal discovery algorithms.
    We show that in such DAGs, the set of nodes reachable via open paths, termed \textit{relatives}, increases monotonically along the causal order. 
    We assess the prevalence of this pattern numerically, and demonstrate that it can be exploited for causal order recovery via sorting by the estimated number of relatives.
    We note that many simulations in the literature feature settings where this yields an excellent proxy for the causal order, and show that a strict increase of relatives along the causal order leads to a singular Markov equivalence class.
    We propose sampling time-series DAGs as a possible alternative and discuss implications for causal discovery algorithms and their evaluation on synthetic data.
\end{abstract}

%% file: chapters/introduction.tex
\section{Introduction}

\paragraph{Random DAGs in causal discovery.}
In the field of statistical causality, directed acyclic graphs (DAGs) are a mathematical tool used to encode the causal relationships between a set of random variables \parencite{pearl2009causality}. 
Causal discovery aims to learn the underlying causal DAG or an equivalence class from observations of these random variables \parencite{spirtes2001causation}.
To evaluate the performance of causal discovery algorithms, most works in the literature evaluate them on synthetically generated data. We list examples in \cref{tab:literature} and refer to \cite{heinze2018causal,vowels2022DAGs,kitson2023survey} for extensive overviews.
Synthetic data generation schemes rely overwhelmingly on the \ERname \parencite{erdos1960evolution}, and scale free \parencite{barabasi1999emergence} random graph models, which are then transformed into DAGs by masking the lower triangle of the adjacency matrix and imposing a vertex order drawn uniformly at random.
For convenience, we will refer to DAGs derived from \ERname graphs in this way as \textit{ER DAGs}, and to DAGs derived from scale free graphs in this way as \textit{SF DAGs}.
Despite their popularity, the topological properties of these random DAGs have as of yet received little study within the context of causal discovery.

\paragraph{Related literature.}
Our contribution is motivated by a line of recent work questioning the realism of synthetic benchmarks in causal discovery \parencite[cf.][]{brouillard2025landscape}.
\cite{gentzel2019case} highlight the arbitrary nature of synthetic data generation and advocate evaluation on downstream causal reasoning tasks.
\cite{reisach2021beware} show that variance tends to explode along the causal order in synthetic data and introduce \textit{var-sortability} to capture the effect. \cite{reisach2023scale} show that by the same mechanism, causal relationships effectively become deterministic in large graphs, as captured by \textit{$R^2$-sortability}. 
Following these findings, \cite{andrews2024better}, \cite{ormaniec2024standardizing}, and \cite{herman2025unitless} suggest alternative data generation regimes with better-understood sortability properties.
Nonetheless, the graph sampling regimes have so far gone largely unchallenged, and ER DAGs and SF DAGs remain the de-facto standard in the literature.
Our contribution extends the scrutiny of synthetic data generation to the topology of the underlying graph models and provides theoretical results to guide the evaluation of causal discovery algorithms using synthetic data.
Thus, our work is also closely related to \cite{nagarajan2025benchmarking}, which investigates the effect of different preferential attachment rules on constraint-based causal discovery, and to \cite{broutin2024increasing} and \cite{blanchard2025phase}, which study the connectivity properties of \ERname graphs with a vertex ordering.
The topology of a random DAG also determines the size of its Markov equivalence class (MEC) \parencite{schmid2022number}. Our results link this structural property to sortability by relatives.
Thus, our work aims to bridge the theoretical analysis of random graphs and empirical evaluation of causal discovery algorithms by introducing a new sortability concept that captures topological properties of random DAGs that are relevant to causal discovery.

\paragraph{Contribution.}
In \cref{sec:stb_by_rels}, we introduce the notion of relatives and provide theoretical results indicating its importance for causal discovery. We then analyze the agreement between an ordering by the number of relatives and a causal ordering numerically, and find that it is high for common settings in the causal discovery literature.
In \cref{sec:discovery}, we illustrate that the number of relatives can be exploited for causal discovery and is largely unaffected by simulation schemes designed to remove other data artifacts.
We also observe that a state-of-the-art MEC discovery algorithm yields excellent performance on DAG discovery, indicating small MECs.
Motivated by this finding, in \cref{sec:MEC} we present a theoretical result linking a strictly increasing number of relatives along the causal order to singular MECs.
Since the pattern of increasing number of relatives along the causal order is likely unintentional and potentially implausible on real-world data, in \cref{sec:time_series} we turn to time-series DAGs as a potential alternative for generating random DAGs, and we provide a sufficient condition to avoid a strictly increasing number of relatives in such DAGs.
We discuss limitations and implications for evaluating causal discovery algorithms in \cref{sec:discussion}, and state our conclusion in \cref{sec:conclusion}.

%% file: chapters/theory.tex
\section{Sortability by relatives}\label{sec:stb_by_rels}
In \cref{sec:definitions} we give a formal definition of causal DAGs, introduce the concept of relatives of a node in a DAG, and state a monotonicity property for the number of relatives.
In \cref{sec:sortability} we first define a measure of agreement between a causal ordering and the ordering by the number of relatives, and then prove a corresponding lower bound for ER DAGs.
We assess the agreement between causal order and ordering by relatives numerically for different graph sizes and sparsities in \cref{sec:numerical_prevalence}.

\subsection{Definitions}\label{sec:definitions}

\paragraph{Causal DAGs.}
Let $\G=(\V, \E)$ be a graph of node set $\V$ and edge set $\E\subseteq\V\times\V$. Unless otherwise indicated, we denote vertices by lower-case letters.
We consider the case of directed graphs, meaning that $(x,y)\in \E$ denotes the directed edge $x\to y$.
An ordered collection of edges forms a directed path. 
We assume acyclicity of the graph, meaning that there are no directed paths 
from a node to itself.
$\G$ is therefore a DAG.
Let each node in $\V$ correspond to a random variable with joint law $\Pr$.
We say the law $\Pr$ is compatible with $G$ if it is Markov with respect to $\G$, and we assume that such a law is also faithful with respect to $\G$
\parencite[see e.g.][Definitions 6.21 and 6.33]{peters2017elements}.
Thus, the $d$-separation criterion \parencite[][Section 4]{geiger1990identifying} in $\G$ matches the conditional independence relations in $\Pr$, and we refer to the graph as a probabilistic graphical model of the random variables.
Assume now that a directed edge $(x,y)$ corresponds to $x$ being a cause of $y$ in some external system modelled by the graph \parencite[][Section 5]{dawid2010beware}. Then, $\G$ is a graphical causal model of the random variables \parencite[][Definition 2.2.2]{pearl2009causality}. For our causal discovery experiments, we additionally assume causal sufficiency, meaning there are no unobserved confounders.

\paragraph{The notion of relatives.}
For every $a,z\in\V$, let
\begin{equation}
    \P_{az} = \begin{cases}
        \text{set of directed paths from }a\text{ to }z,\; &\text{if }a\neq z\\
        \{a\}\text{ by convention},\; &\text{if }a=z
    \end{cases}
\end{equation}
be the set of paths from $a$ to $z$.
On this basis we define the following shorthands for any $x,y,z\in\V$.
Let $\anc(y)\coloneq\{x\colon\P_{xy}\neq\emptyset\}$ denote the ancestors of a given node $y$, let $\desc(y)\coloneq \{z\colon\P_{yz}\neq\emptyset\}$ denote its descendants, and let $\roots(y)\coloneq\{x\colon x\in\anc(y), \anc(x)=\{x\}\}$ denote its roots. We further define the parents $\pa(y)=\{x\colon (x,y)\in\E\}$ and children $\ch(x)=\{y\colon (x,y)\in\E\}$.
For simplicity, whenever the argument to $\anc$, $\desc$, $\rel$, $\pa$, or $\ch$ is itself a set, we take this to mean the union of the respective operator on each element of the set. 
For a node $y$ in a causal DAG we then define the relatives of $y$ as
\begin{align}
    \rel(y)\coloneq\underset{x\in\anc(y)}{\bigcup}\desc(x) = \desc(\anc(y)).
    \label{eq:def_relatives}
\end{align}
Since we assume faithfulness, $x\in\rel(y)$ implies $x\nindep y \mid \emptyset$, meaning that the relatives of a node other than itself are exactly the adjacencies found in the first iteration $(n=0)$ of the PC algorithm \parencite[][Section 5.4.2]{spirtes2001causation}.

\paragraph{Relevance for causal discovery.}
The relatives of a node are interesting for the purpose of causal discovery primarily due to the monotonicity property in \cref{thm:relatives}, which suggests that an ordering based on the estimated number of relatives in increasing fashion may result in a good approximation of a causal ordering.
\begin{theorem}[Monotonicity of relatives]
    For any $x,y\in\V$,%
    \begin{align}
        x\in \anc(y)\implies \rel(x) \subseteq \rel(y).
    \end{align}
    \label{thm:relatives}
\end{theorem}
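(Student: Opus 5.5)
The plan is to prove the inclusion directly from the definition $\rel(y)=\desc(\anc(y))$. The key observation is that the ancestor relation is transitive. Once we have that, monotonicity of the union over the index set finishes the argument. The argument is purely graph-theoretic, so neither faithfulness nor any property of $\Pr$ is needed.

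First, we establish transitivity of $\anc$: if $x\in\anc(y)$, then $\anc(x)\subseteq\anc(y)$. Take $w\in\anc(x)$, so $\P_{wx}\neq\emptyset$, and recall that $\P_{xy}\neq\emptyset$ by assumption.
\begin{itemize}
\item If $w=x$, then trivially $w\in\anc(y)$.
\item If $x=y$, then $w\in\anc(x)=\anc(y)$.
\item Otherwise, concatenate a directed path from $w$ to $x$ with one from $x$ to $y$ to obtain a directed walk from $w$ to $y$.
\end{itemize}
By acyclicity this walk cannot revisit a vertex. A repeated vertex would yield a directed cycle. So the walk is a directed path, and in particular $w\neq y$, since $w=y$ would itself produce a cycle through $x$. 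Hence $\P_{wy}\neq\emptyset$, i.e.\ $w\in\anc(y)$.

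The only point requiring care is the convention $\P_{aa}=\{a\}$. This convention is what makes each node its own ancestor and descendant, and the case split above handles it. This bookkeeping is the only mild obstacle.

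Second, we use that $\desc$ applied to a set is defined as the union over its elements, and so is monotone with respect to inclusion. From $\anc(x)\subseteq\anc(y)$ we therefore conclude
\begin{align}
\rel(x)=\bigcup_{w\in\anc(x)}\desc(w)\subseteq\bigcup_{w\in\anc(y)}\desc(w)=\rel(y),
\end{align}
which is the claim of \cref{thm:relatives}.
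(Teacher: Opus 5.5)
Your proof is correct and follows the same route as the paper, which writes $\rel(y)=\desc(\anc(x))\cup\desc(\anc(y)\setminus\anc(x))$ and identifies the first term as $\rel(x)$. That decomposition silently relies on $\anc(x)\subseteq\anc(y)$, and you make this transitivity step explicit (including the $\P_{aa}=\{a\}$ convention) before applying monotonicity of the union.
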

\begin{proof}
    Note that
    \begin{align*}
        \rel(y) = \desc(\anc(y)) = \desc\left(\anc(x)\right) \cup \desc\left(\anc(y)\setminus\anc(x)\right),
    \end{align*}
    where the first argument of the union on the right-hand side is equal to $\rel(x)$, hence the implication.
\end{proof}

\subsection{Sorting by the number of relatives}\label{sec:sortability}

To capture the agreement between a causal order and an ordering by the number of relatives, we propose a sortability criterion in the sense of \textcite[][Section 3.1]{reisach2023scale}.
For a set of nodes $\V$, let $\rho\colon \V\to\R$ be a function to be used as a sorting criterion.
We take $\rho$ to denote the number of relatives of a given node $v\in\V$, that is, $\rho(v)\coloneq|\rel(v)|$.
For a graph $\G=(\V,\E$) with $|\E|\geq1$ we define sortability by the number of relatives, or \textit{rel-sortability} for short, as
\begin{align}
    \s_\rho(\G) \coloneq \frac{\sum_{(x,y)\in\E}\s_\rho(x,y)}{|\E|}
    \quad \text{with} \quad
    \s_\rho(x,y) \coloneq \begin{cases}
        0, &\text{if } \rho(x)>\rho(y)\\
        \frac{1}{2}, &\text{if } \rho(x)=\rho(y)\\
        1, &\text{if } \rho(x)<\rho(y)
    \end{cases}
    \quad \text{for } (x,y)\in\E.
    \label{eq:sortability}
\end{align}
As with the sortabilities in \cite{reisach2021beware} and \cite{reisach2023scale}, for a $\s_\rho$ value of $1$ sorting by $\rho$ reveals a causal ordering, for a value of $0$ it reveals an inverse causal ordering, and for a value of $1/2$ it amounts to a random guess. 
Note that our definition differs from the sortabilities in these works in that we count edges instead of paths.
We prefer this version since it retains the qualitative interpretation and can also be understood as an inverted and normalized version of the topological order divergence in \textcite[][Section 4.1]{rolland2022score}, provided there are no ties in $\rho$. In that case, $(1-\s_\rho(\G))|\E|$ is a lower bound on the Hamming distance between $G$ and any graph with a topological ordering compatible with the $\rho$ ordering.

\paragraph{A lower bound on the number of relatives.}
By \cref{thm:relatives}, we know that for any node pair $(x,y)\in\E$ it holds that $\s_\rho(x,y)\geq1/2$, and hence that $\s_\rho(\G) \geq 1/2$. 
A tighter lower bound for edges in ER graphs is given in \cref{thm:lower_bound}.
\begin{theorem}
    \label{thm:lower_bound}
    Consider an ER DAG obtained from an $ER(n,p)$ graph with $p=2c/(n-1)$ by imposing a fixed topological order.
    For any node $v\in\V$, let $r_v$ be the rank of $v$ and define $q_v = r_v/n$, the quantile position of $v$.
    Let $(x,y)\in\E$ be a pair of directly connected nodes in the DAG. Then
    \begin{align}
        \lim_{n\to\infty}\Pr(\s_\rho(x,y)=1)
        \geq
        \max\left(1-\exp\left(e^{-2c q_y}-e^{-2c q_x}\right), \frac12\right).
        \label{eq:lower_bound}
    \end{align}
\end{theorem}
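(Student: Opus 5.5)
By \cref{thm:relatives}, $\rel(x)\subseteq\rel(y)$ whenever $(x,y)\in\E$. Hence $\s_\rho(x,y)=1$ holds exactly when $\rel(y)\setminus\rel(x)\neq\emptyset$, and otherwise $\s_\rho(x,y)=1/2$. The bound $1/2$ inside the maximum needs a separate argument, addressed at the end. For the exponential term, the plan is to exhibit a specific witness in $\rel(y)\setminus\rel(x)$. Restrict attention to nodes $a$ that are parents of $y$ and not relatives of $x$. The simplest sufficient event is that $y$ has a parent $a\ne x$ with $a\notin\rel(x)$. Then $a\in\anc(y)$ gives $a\in\rel(y)$, so $\rel(y)\setminus\rel(x)\neq\emptyset$.

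To control $\rel(x)$, use that in the sparse regime $p=2c/(n-1)$ the ancestor set of $x$ is of bounded size with high probability: it is locally a subcritical-like backward exploration. The descendant sets of those ancestors, however, can be large. So rather than a parent, I would take the witness to be any node $a$ with rank between $r_x$ and $r_y$ that is a parent of $y$ and has no ancestor in common with $x$.

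The key computation goes as follows. Each node $a$ with $r_x<r_a<r_y$ is a parent of $y$ independently with probability $p$. The expected number of such candidates is
\[
p(r_y-r_x)\approx 2c(q_y-q_x).
\]
In the limit, the count of such parents is Poisson with mean $2c(q_y-q_x)$, so the probability that at least one exists tends to $1-e^{-2c(q_y-q_x)}$. The stated bound $1-\exp(e^{-2cq_y}-e^{-2cq_x})$ has a slightly different form. This suggests the intended witness event weights candidate nodes $a$ at position $q$ by $2c\,e^{-2cq}\,dq$, since
\[
\int_{q_x}^{q_y}2c\,e^{-2cq}\,dq=e^{-2cq_x}-e^{-2cq_y}.
\]
A natural reading: $a$ must be a root with no earlier parents, which happens with probability $(1-p)^{r_a}\to e^{-2cq_a}$, and $a$ must be a parent of $y$. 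A root $a$ with $r_a>r_x$ cannot be an ancestor of $x$, and $\rel(x)$ is generated by $\anc(x)$. So if $a$ is a root, $a\notin\rel(x)$ unless $a$ is a descendant of some ancestor of $x$, which is impossible because a root has no ancestors besides itself. The number of such parents of $y$ is then asymptotically Poisson with mean $\int_{q_x}^{q_y}2c\,e^{-2cq}\,dq$. By the Poisson limit (or just the product $\prod_a(1-p(1-p)^{r_a})$), the probability that none exists tends to $\exp(e^{-2cq_y}-e^{-2cq_x})$. This is the main computation.

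The main obstacle is independence and conditioning. The events "$a$ is a root" and "$a\to y$" involve disjoint edge sets for distinct $a$, which makes them independent. We also condition on $(x,y)\in\E$, an edge not involved in any of these events except when $a=x$, which we exclude. So the product formula is exact, and taking logs with $p\to0$ gives the Riemann-sum limit. For the $1/2$ term, I would instead use the symmetric witness: $y$ itself is in $\rel(y)$, and $y\in\rel(x)$ requires only that $y$ be a descendant of some ancestor of $x$, which it is via $x$. That witness fails, so I would fall back on the argument that $\Pr(\s_\rho=1)\ge 1/2$ in the limit via the event that $y$ has any parent $a\ne x$ outside $\rel(x)$. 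If that proves hard, I would check whether the $1/2$ is meant as a bound on $\mathbb{E}[\s_\rho]$ rather than on the probability, and adjust accordingly.
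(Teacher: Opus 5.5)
For the exponential term, your argument is the paper's. The paper uses the same witness event $E$: $y$ has a parent $a$ that is a root with $r_x<r_a<r_y$. Such an $a$ lies in $\rel(y)\setminus\rel(x)$ because $\anc(a)=\{a\}$ and $a\notin\anc(x)$. The paper then writes the same product $\prod_{r_x<r_a<r_y}\bigl(1-p(1-p)^{r_a-1}\bigr)$ for $\Pr(E^c)$ and takes a Riemann-sum limit. Two of your points are, if anything, cleaner than the paper's block-partition argument. First, the product is exact: the edge sets are disjoint, and the conditioning edge $(x,y)$ is not among them. Second, you can pass to the limit by expanding the logarithm directly. Writing $(1-p)^{r_a}$ instead of $(1-p)^{r_a-1}$ is immaterial. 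You should drop the intermediate count with mean $2c(q_y-q_x)$, because a parent of $y$ ranked between $x$ and $y$ may well lie in $\rel(x)$.

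The $1/2$ in the maximum is where your proposal stalls. Your fallback is that $y$ has a parent $a\neq x$ outside $\rel(x)$ with probability at least $1/2$. This cannot work, because the claim is false for $\Pr(\s_\rho(x,y)=1)$. Conditionally on $(x,y)\in\E$, with probability $(1-p)^{r_y-2}\to e^{-2cq_y}$ the node $x$ is the only parent of $y$. In that case $\anc(y)=\anc(x)\cup\{y\}$ and $\desc(y)\subseteq\desc(x)$, so $\rel(y)=\rel(x)$ and $\s_\rho(x,y)=1/2$. Hence $\lim_{n\to\infty}\Pr(\s_\rho(x,y)=1)\le 1-e^{-2cq_y}$, which is below $1/2$ whenever $2cq_y<\log 2$, for example when $c=0.1$.

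The paper's proof likewise derives only the first argument of the maximum. The $1/2$ comes from the deterministic bound $\s_\rho(x,y)\ge 1/2$ of \cref{thm:relatives}, which is really a statement about $\mathbb{E}[\s_\rho(x,y)]$ — exactly the reinterpretation you suspected. Since $\s_\rho(x,y)\in\{1/2,1\}$, one has $\mathbb{E}[\s_\rho(x,y)]=\tfrac12+\tfrac12\Pr(\s_\rho(x,y)=1)$. Your computation therefore gives $\lim_{n\to\infty}\mathbb{E}[\s_\rho(x,y)]\ge 1-\tfrac12\exp\bigl(e^{-2cq_y}-e^{-2cq_x}\bigr)$, which dominates the stated maximum. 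For the probability itself, only the first argument of the maximum is valid.
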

\noindent The proof of \cref{thm:lower_bound} rests on \cref{lem:ancestors} in \cref{app:proof_details} and is shown in \cref{app:thm_2}.
Note that the first argument of the maximum in \cref{eq:lower_bound} increases with the quantile distance between $x$ and $y$, that is, when $q_y$ is increased or $q_x$ is decreased. For fixed $q_x<q_y$, the dependence of the first argument on the density parameter $c$ is non-monotone: it first increases and then decreases as $c$ grows. Hence, we may expect rel-sortability in ER DAGs to exceed the generic lower bound of $1/2$ in intermediate-density regimes, while approaching $1/2$ in both very sparse and very dense regimes.

\subsection{Numerical analysis of the prevalence of rel-sortability}\label{sec:numerical_prevalence}
We begin by probing the prevalence of high rel-sortability across random DAGs numerically by simulating graphs with different numbers of nodes $n$ and parameter $c$. We set the edge density to $p=2c/(n-1)$, yielding approximately $cn$ edges. \cref{fig:sortability_er_sf} shows the results for ER and SF DAGs with $n\in(2,\dots,100)$ and $c\in(0.5,1,2, \dots, 50)$. Each sortability value shown is an average over $10$ independent draws for each combination of $c$ and $n$.

\paragraph{Results.}
The result for ER DAGs can be seen in \cref{fig:sortability_er}. We observe a band of high sortability in the intermediate density regime from $c=1$ until about $c=5$, with the upper end growing slowly in $n$.
In \cref{fig:sortability_sf}, we observe a similar pattern for SF graphs with a wider band of high rel-sortability.
The high sortability values for the maximal values of $c$ for each $n$ in \cref{fig:sortability_sf} result from the star-graph initialization procedure for SF graphs.
We stress that many simulation settings used in the causal discovery literature (see e.g.\ \cref{tab:literature}) fall into the high rel-sortability bands.
The results shown match the rel-sortability lower bound for ER DAGs in \cref{thm:lower_bound}, and suggest a similar bound may exist for SF DAGs. Overall, the experiment indicates that estimating the number of relatives has the potential to yield a close approximation of a causal ordering.
\begin{figure}[H]
    \centering
    \begin{subfigure}{.49\linewidth}
        \centering
        \includegraphics[width=\linewidth]{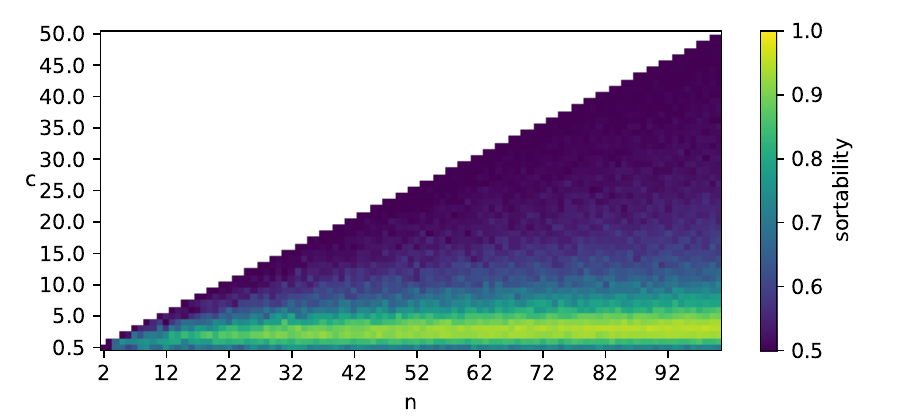}
        \caption{ER DAGs}
        \label{fig:sortability_er}
    \end{subfigure}
    \begin{subfigure}{.49\linewidth}
        \centering
        \includegraphics[width=\linewidth]{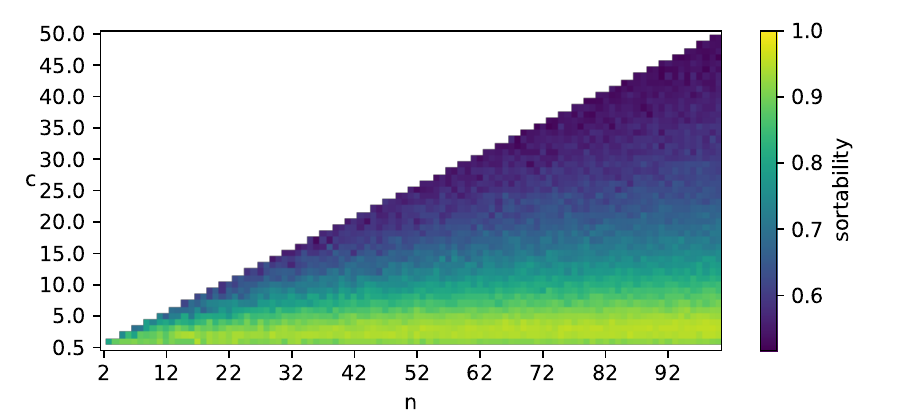}
        \caption{SF DAGs}
        \label{fig:sortability_sf}
    \end{subfigure}
    \caption{Rel-sortability of random ER and SF DAGs with $n$ nodes and density parameter $c$.}
    \label{fig:sortability_er_sf}
\end{figure}

%% file: chapters/results.tex
\section{Exploiting rel-sortability for causal discovery}\label{sec:discovery}
We begin by proposing a simple algorithm to estimate the number of relatives from data, and a second algorithm to use the induced ascending order to recover a DAG via sparse regression.
In \cref{sec:schemes} we compare rel-sortability to the other sortability criteria in the literature and show that it is largely unaffected by simulation schemes aimed at removing them.
In \cref{sec:emp_causal_discovery}
we show that sorting by relatives can be highly effective for causal discovery.
We further note that one of the methods we compare to, an algorithm implementing a recent advance in MEC learning, also yields extremely good results for DAG discovery which indicates that the MECs of the simulated DAGs are small.

\paragraph{Estimating an ordering by the number of relatives empirically.}
Let $\G$ be a causal DAG of $n$ nodes, and let the random vector $(X_1,\dots,X_n)$ contain the random variables corresponding to the nodes in $\G$.
Let $\X_j\in\R^m$ be a real-valued vector of $m$ independently and identically distributed (i.i.d.) observations of the $j$-th random variable in this vector. Given a threshold $\varepsilon>0$, we estimate the number of relatives empirically and use it to obtain an order as follows.\\
\DontPrintSemicolon
\SetNlSty{textbf}{\small}{.}
\begin{algorithm}[H]
    \setlength{\baselineskip}{1.2\baselineskip}  %
    \KwIn{data $(\X_1, \dots, \X_n)$, 
    threshold $\varepsilon$.
          }
    \KwOut{order $s$.}
    Compute the correlation matrix $C\in\R^{n\times n}$ with $C_{ij}=\frac{\text{Cov}(\X_i,\X_j)}{\sqrt{\text{Var}(\X_i)\text{Var}(\X_j)}}$ for $i,j\in (1,\dots,n)$.\\
    Let $D\in\N^n$ be the number of column entries in $C$ with absolute value greater than $\varepsilon$.\\
    Define an ordering $s$ to be the index set given by sorting $(1,\dots,n)$ by $D$ in ascending fashion.\\
    \textbf{Return} $s$.\\
    \caption{Estimating an ordering by the number of relatives.}
    \label{algo:est_rels}
\end{algorithm}
\noindent
This procedure is motivated by the faithfulness assumption, which states that there are no probabilistic independencies other than the ones implied by d-separation \parencite[see e.g.][Section 1.2.3]{pearl2009causality} in the graph. Hence, in linear Gaussian models, zeros in the correlation matrix indicate the absence of an open path between two variables in the sample limit \parencite[][Theorem 3.2]{spirtes2001causation}, meaning that they are not relatives. 
In practice, we may over- or under-estimate the true number of relatives depending on the threshold $\varepsilon$.
For finite samples, thresholding the empirical correlation matrix therefore yields noisy estimates of the number of relatives. 
In our forthcoming experiments we choose $\varepsilon$ to be the minimum absolute sample correlation required for a given significance level assuming Gaussian distributions.

\paragraph{Estimating a DAG given an order using sparse regression.}
To estimate a DAG based on an estimated ordering by the number of relatives, we follow \textcite[Section 3.2]{reisach2023scale}, and rely on ordering-based search \parencite{Teyssier2012Jul}. Given an order, we regress the observations of each variable on those of the variables preceding it  
using sparse regression. We take nonzero regression coefficients to be adjacencies in the estimated adjacency matrix. This yields \cref{algo:regress} (the penalty parameter $\lambda$ is chosen via the Bayesian information criterion).\\
\begin{algorithm}[H]
    \setlength{\baselineskip}{1.2\baselineskip}  %
    \KwIn{data $(\X_1,\dots,\X_n)$, order $s\in\N^n$}%
    \KwOut{Adjacency matrix $A\in (0,1)^{n\times n}$}
    Let $A\in\{0\}^{n\times n}$.\\
    Standardize $(\X_1,\dots,\X_n)$.\\
    \For{$j\in 2,\dots,n$}{
        Let $t=s_j$ be the index of the target variable in the ordering.\\
        Let $S=\{s_1,\dots,s_{j-1}\}$, so $\X_S=(\X_i)_{i\in S}$ are observations of the predecessors.\\
        Estimate the least-squares coefficients $\hat{\beta}^\text{OLS}=\argmin_{\beta\in\R^{j-1}}\|\X_t-\X_S\beta\|_2^2$.\\
        Set $A_{S_\ell t}$ to $1$ for $\ell\in(1,\dots,j-1)$ if $\hat\beta_\ell\neq0$,
        where $\hat\beta=\argmin_{\beta\in\R^{j-1}}\|\X_{t}-\X_S\beta\|_2^2+\lambda_\text{BIC}\sum_{i=1}^{j-1}\frac{|\beta_i|}{|\hat{\beta}_i^{\text{OLS}}|}$.
    } 
    \textbf{Return} $A$
    \caption{Estimating adjacencies given an ordering.}
    \label{algo:regress}
\end{algorithm}

\subsection{Evaluation of rel-sortability across data generation regimes}\label{sec:schemes}
By virtue of being a graphical property, rel-sortability is unaffected by the parameters of the structural equations compatible with a DAG, which have been the primary focus of the literature aimed at addressing other sortabilities.
To illustrate this aspect, we evaluate an oracle rel-sortability based on the true relatives and an empirical version using \cref{algo:est_rels} across different data generation regimes. 
We compare these to 
sortability by variance \parencite{reisach2021beware}, and 
sortability by the coefficient of determination $R^2$ \parencite{reisach2023scale}.
For all three sortability criteria, we evaluate sortability as in \cref{eq:sortability}.
We generate synthetic data from structural causal models (SCMs) \parencite[][Definition 7.1.1]{pearl2009causality} with linear functions and additive noise \parencite[cf.][]{hoyer2008nonlinear}.
We draw graph structures and parameters independently from one another at random.
The graph structures are drawn as ER and SF random DAGs with $n=30$ nodes and edge density parameter $c\in(1,\dots,10)$ controlling the expected number of edges per node.
For each combination, we independently draw $10$ graphs. 
We draw observations from 
\begin{itemize}
    \item standardized SCMs, termed \textit{sSCMs} \parencite{reisach2021beware}, 
    \item internally standardized SCMs, termed \textit{iSCMs} \parencite{ormaniec2024standardizing}, 
    \item the DAG-adaptation of the Onion method, termed \textit{DaO} \parencite{andrews2024better}, and
    \item unitless unrestricted Markov-consistent SCM generation, termed \textit{UUMC} \parencite{herman2025unitless}.
\end{itemize}
For each sampled graph, we first parameterize an SCM by drawing slope coefficients independently from $\text{Unif}(0.5,1)$ and drawing independent additive Gaussian noise with standard deviations also drawn independently from $\text{Unif}(0.5,1)$. Note that the DAG underlying the SCMs with Gaussian noise and non-equal variances is not identifiable in general.
The sSCM, iSCM, and DaO sampling variants all adapt an existing set of parameters, meaning that the implied parameters for a regular SCM are effectively different for each procedure. By contrast, the UUMC procedure constitutes a sampling procedure of its own, which we use to generate parameters independently for each of our random graphs.
For each pair of graph and parameters we draw $m=10000$ observations independently and compute an empirical version of rel-sortability and $R^2$-sortability \parencite{reisach2023scale}.
Regarding the estimation of the number of relatives in \cref{algo:est_rels}, note that for two independent Gaussians with sample correlation $C_{ij}$, the transformed sample correlation $C_{ij}\left((m-2)/(1-C_{ij}^2)\right)^{1/2}$ follows a Student $t$-distribution with $m-2$ degrees of freedom.
Hence, we set the threshold $\varepsilon$ to the minimum absolute sample correlation required for a significance level $\alpha$ under this test.
In all forthcoming experiments, we choose $\alpha=0.05$.

\paragraph{Results.}
The results are shown in \cref{fig:schemes_ER}; SF DAGs are shown in \cref{fig:schemes_SF} of \cref{app:SF}, and non-standardized SCMs in \cref{fig:non_std_schemes} of \cref{app:not-standardized}.
\noindent
As expected from our simulations in \cref{sec:numerical_prevalence}, the oracle rel-sortability rises quickly as $c$ grows, and then falls slowly toward the non-informative value of $1/2$.
We observe that the empirical approximation of the relatives is close for sSCMs, iSCMs, and UUMC. For DaO, there is a gap, and for the largest values of $c$, the empirical version is higher than the theoretical counterpart. This indicates that our estimation procedure for the relatives can pick up on other data properties, including possibly numerical artifacts. 
All four schemes control sortability by variance at around $1/2$ by design, in stark contrast to the non-standardized SCM data (\cref{fig:non_std_schemes}).
The increase in $R^2$-sortability for denser graphs in sSCMs is known from \cite{reisach2023scale}, but the high $R^2$-sortability values observed for iSCMs adds nuance to the claim in \cite{ormaniec2024standardizing} that iSCMs are mostly not $R^2$-sortable. 
For SF-DAGs (\cref{fig:schemes_SF}) we observe very high $R^2$-sortability, in particular for sSCMs and iSCMs and in sparser settings.
Most importantly, rel-sortability tends to take values well above $1/2$ across graph types, and tends to be highest when $R^2$-sortability is low. This signals a new, likely unintended, and potentially implausible property of common sampling schemes for random causal DAGs that may be exploited for causal discovery.
\begin{figure}[H]
    \centering
    \begin{subfigure}{.24\linewidth}
        \centering
        \includegraphics[width=\linewidth]{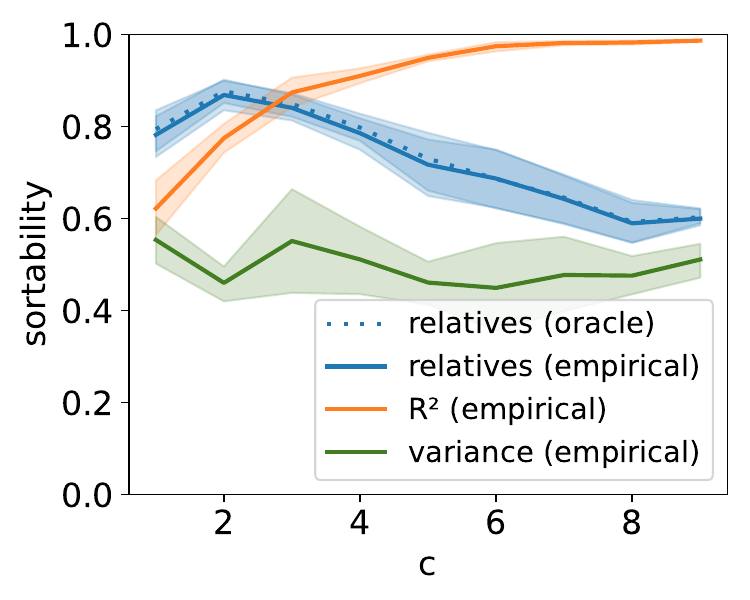}
        \caption{sSCM}
        \label{fig:schemes_sSCM_ER}
    \end{subfigure}
    \begin{subfigure}{.24\linewidth}
        \centering
        \includegraphics[width=\linewidth]{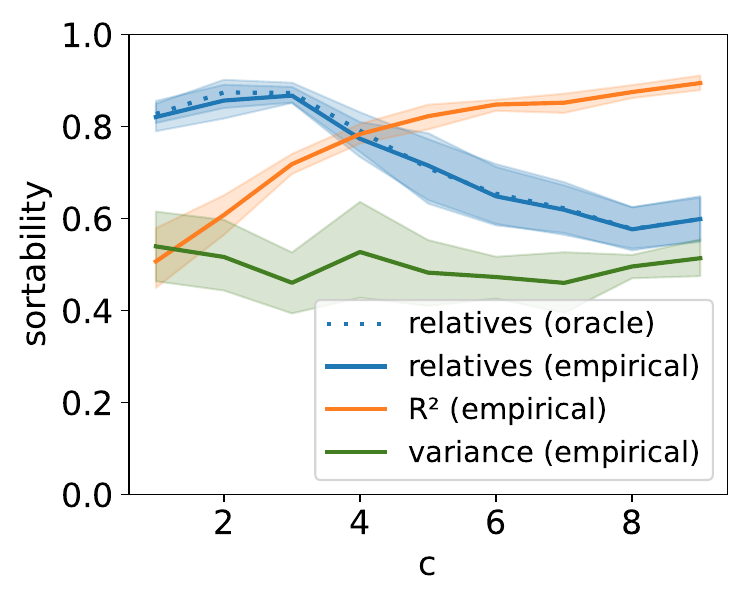}
        \caption{iSCM}
    \end{subfigure} 
    \begin{subfigure}{.24\linewidth}
        \centering
        \includegraphics[width=\linewidth]{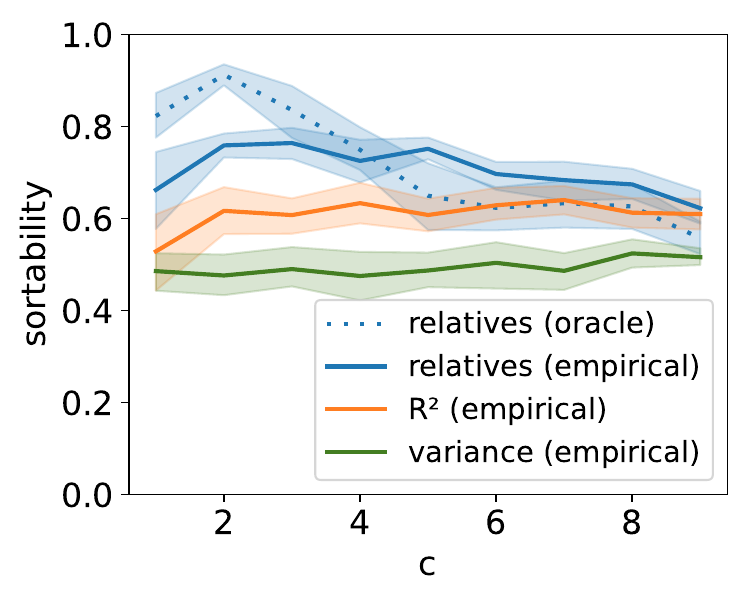}
        \caption{DaO}
    \end{subfigure}
    \begin{subfigure}{.24\linewidth}
        \centering
        \includegraphics[width=\linewidth]{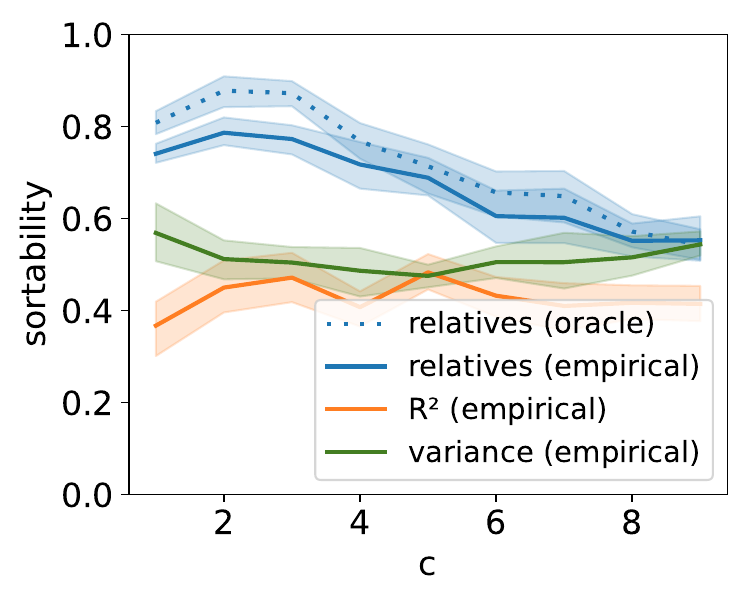}
        \caption{UUMC}
    \end{subfigure}
    \caption{Comparison of rel-sortability to other sortabilities in ER DAGs.}
    \label{fig:schemes_ER}
\end{figure}

\subsection{Causal discovery by estimating relatives}\label{sec:emp_causal_discovery}
We now demonstrate the potential impact of rel-sortability for causal discovery.
Combining \cref{algo:est_rels} and \cref{algo:regress} yields a causal discovery algorithm that sorts by the estimated number of relatives and regresses in the resulting order. We term this algorithm \textit{rel-SortnRegress}. For reference, we run the DAG learning algorithm \textit{DAGMA} \parencite{bello2022dagma}, which has been found to perform well on synthetic data from non-standardized SCMs. We choose the L2 loss and default parameters. We also include \textit{FLOP} \parencite{wienobst2025embracing}, the state-of-the-art CPDAG discovery algorithm (recall that the CPDAG of a DAG represents its MEC via undirected edges). We run FLOP with a penalty parameter of $2$ (the default) and $10$ restarts.
We draw random graphs and observations as in \cref{sec:schemes}.
To evaluate the quality of the graph learned by these algorithms, we compare the learned DAG to the true DAG using the structural intervention distance (SID) \parencite{peters2015structural}. 
SID measures how often the learned graph would lead to incorrect interventional conclusions, making it a causally meaningful evaluation criterion.
Since FLOP returns a CPDAG, we first remove all undirected edges before computing the SID. Though crude, we expect this to be a minor issue since most CPDAGs found by FLOP are in fact nearly completely directed in our experiments.

\paragraph{Results.}
We show results for ER DAGs in \cref{fig:SID_ER}. Results for SF DAGs can be found in \cref{fig:SID_SF} of \cref{app:SF}. Results on non-standardized SCM data are shown in \cref{app:not-standardized}.
Strikingly, we observe that FLOP outperforms all other algorithms in every setting, including on SF DAGs and non-standardized data.
This strongly suggests that the MECs are small across the different simulations, and hence that good CPDAG algorithms may be superior DAG discovery algorithms if that is the case.
This observation is surprising since the MECs for sparse graphs can be large \parencite{jahn2025lower}, and prompts our forthcoming discussion on the link between high rel-sortability and MEC size in \cref{sec:MEC}.
We caution that it is possible that FLOP's performance is, at least in part, due to other data artifacts and calls for further investigation.
Our algorithm rel-SortnRegress also yields remarkably good results, coming close to FLOP on sSCM and iSCM for $c=1$ and $c=2$, the most common choice in the literature. Though it performs worse than $R^2$-SortnRegress for denser sSCMs, it largely outperforms or matches it on iSCM, DaO, and UUMC, the three data generation regimes specifically designed to counteract other sortability patterns.
This corroborates our theoretical analysis of the relevance of relatives to causal discovery, and suggests its utility as a baseline for assessing the simplicity of the causal discovery due to topological patterns of random DAG generation schemes.
Also notably, DAGMA yields competitive SID results only on non-standardized settings, suggesting that it relies on information in the marginal variances, which is problematic since real-world data scales are typically arbitrary \parencite[cf.][]{reisach2021beware}.
\begin{figure}[H]
    \centering
    \begin{subfigure}{.24\linewidth}
        \centering 
        \includegraphics[width=\linewidth]{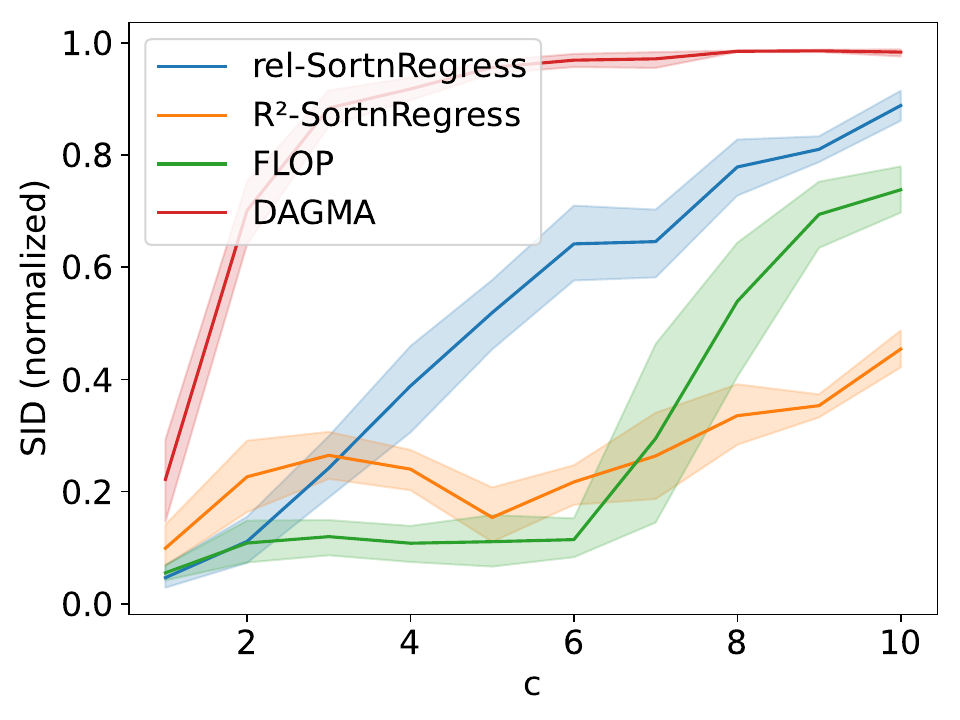}
        \caption{sSCM}
        \label{fig:cd_sSCM_ER}
    \end{subfigure} 
    \begin{subfigure}{.24\linewidth}        
        \centering
        \includegraphics[width=\linewidth]{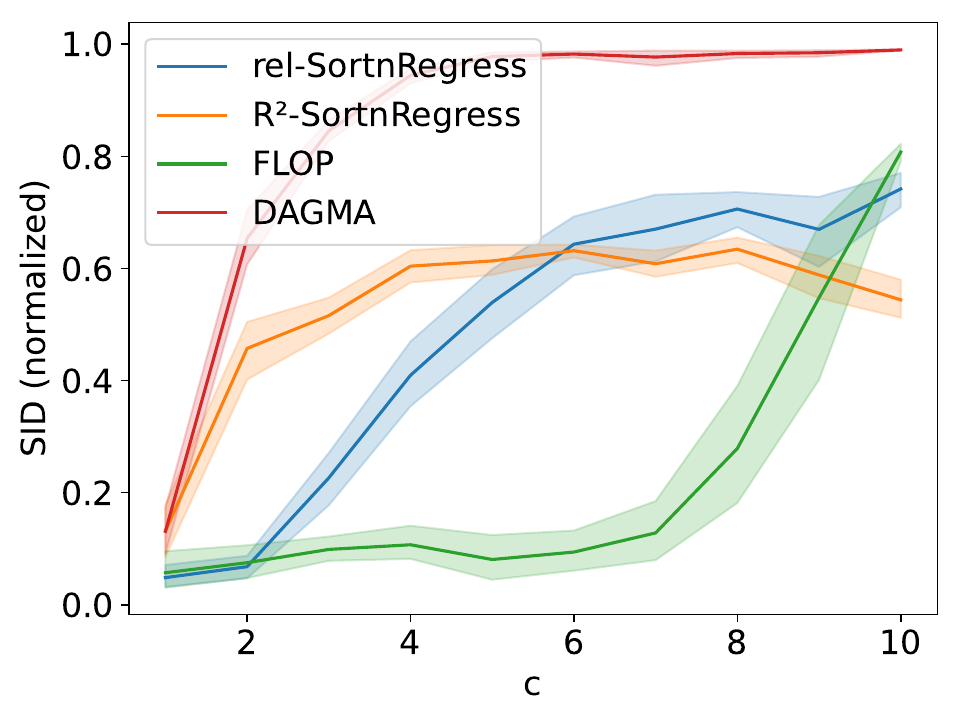}
        \caption{iSCM}
    \end{subfigure}
    \begin{subfigure}{.24\linewidth} 
        \centering
        \includegraphics[width=\linewidth]{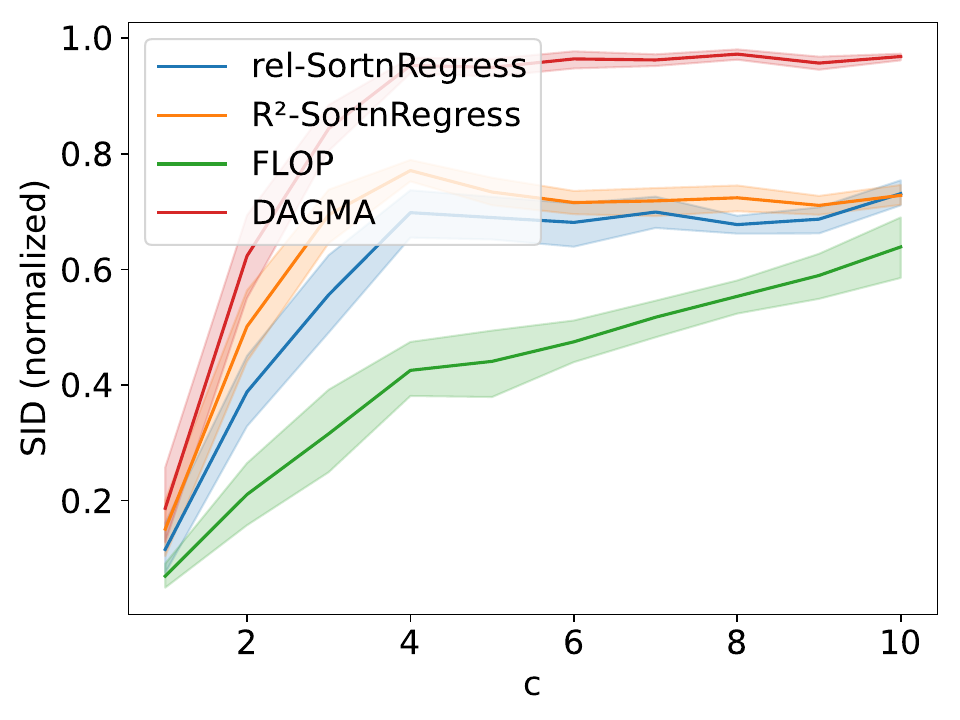}
        \caption{DaO}
    \end{subfigure}
    \begin{subfigure}{.24\linewidth}  
        \centering
        \includegraphics[width=\linewidth]{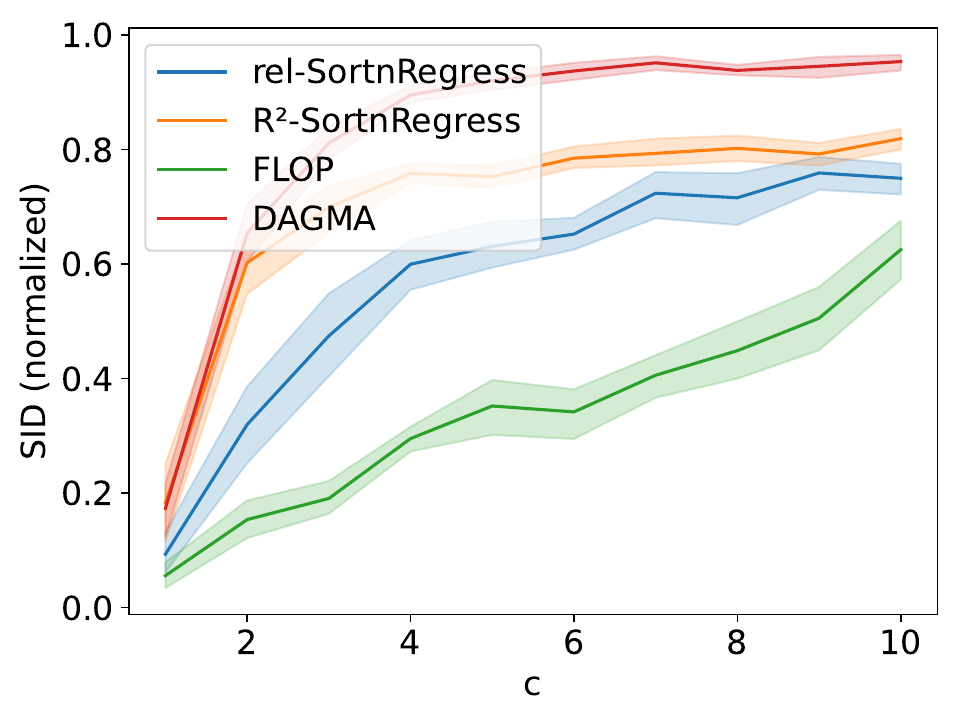}
        \caption{UUMC}
    \end{subfigure}
    \caption{Comparative SID (lower is better) performance of rel-SortnRegress on ER DAGs.}
    \label{fig:SID_ER}
\end{figure}

%% file: chapters/MEC_size.tex
\section{High rel-sortability and MEC size}\label{sec:MEC}
Prompted by the excellent performance of the CPDAG discovery algorithm FLOP in \cref{sec:emp_causal_discovery}, we investigate the connection between high rel-sortability and the size of the MEC. 
Though MECs are known to be small for dense DAGs, they can be large for sparse DAGs \parencite{jahn2025lower}.
However, as we have seen in \cref{sec:numerical_prevalence}, sparse ER DAGs (provided $c>2$) and sparse SF DAGs are characterized by high rel-sortability.
The MEC of a DAG with high rel-sortability may be expected to be small, since the differences in relatives could result in unshielded colliders which can be oriented by the rules in \textcite{meek1995orientation}. 
We formalize this idea in \cref{thm:cpdag_dag}.
Thus, rel-sortability being high may, at least partially, explain why FLOP performs well, even in the sparser settings.
\cref{thm:cpdag_dag} links rel-sortability to CPDAG discovery, and raises the question to what extent the evaluation of CPDAG discovery algorithms more broadly may also be influenced by potentially implausible properties of synthetic data generation schemes.

\begin{theorem}
    For a DAG $\G=(\V,\E)$ with $\s_\rho(\G)=1$, the MEC of $\G$ contains only $\G$ itself.
    \label{thm:cpdag_dag}
\end{theorem}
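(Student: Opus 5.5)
The plan is to reduce the statement to a purely graphical fact using Chickering's transformational characterization of Markov equivalence. Two DAGs are Markov equivalent if and only if one can be turned into the other by a sequence of reversals of \emph{covered} edges, where each intermediate graph is a DAG in the same MEC. An edge $x\to y$ is covered if $\pa(y)=\pa(x)\cup\{x\}$. In particular, the MEC of $\G$ contains a DAG other than $\G$ if and only if $\G$ contains at least one covered edge. It therefore suffices to show that $\s_\rho(\G)=1$ rules out covered edges. Because $\s_\rho(\G)$ is an average of terms in $\{0,\frac12,1\}$, the condition $\s_\rho(\G)=1$ means $\rho(x)<\rho(y)$ for every $(x,y)\in\E$. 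So the core claim is: if $x\to y$ is covered, then $\rel(x)=\rel(y)$.

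The key step is the ancestor computation. Suppose $\pa(y)=\pa(x)\cup\{x\}$. Every ancestor of $y$ other than $y$ itself is an ancestor of some parent of $y$. Every parent of $y$ is either $x$ or a parent of $x$, and hence lies in $\anc(x)$. This gives $\anc(y)=\anc(x)\cup\{y\}$. Applying the definition \cref{eq:def_relatives},
\begin{align*}
\rel(y)=\desc(\anc(x))\cup\desc(y)=\rel(x)\cup\desc(y).
\end{align*}
Since $x\in\anc(y)$, we have $\desc(y)\subseteq\desc(x)\subseteq\rel(x)$, using $x\in\anc(x)$. Hence $\rel(y)=\rel(x)$, so $\rho(x)=\rho(y)$ and $\s_\rho(x,y)=\frac12$. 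This contradicts $\s_\rho(\G)=1$. Therefore $\G$ has no covered edge, and by Chickering's theorem its MEC is $\{\G\}$.

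The main point needing care is the appeal to the covered-edge characterization. I would cite Chickering (1995), which states that any two Markov equivalent DAGs are connected by a sequence of covered edge reversals. Consequently, if the MEC contained some $\G'\neq\G$, the first reversal in such a sequence starting from $\G$ would be a covered edge of $\G$ itself. I would also note the converse direction, which is trivial but reassuring: reversing a covered edge preserves the skeleton and the v-structures, so a covered edge does yield a distinct member of the MEC.

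If one prefers to avoid Chickering, the fallback is to argue via Meek's rules: show that every edge is compelled by using $\rel(x)\subsetneq\rel(y)$ to locate a node $a\in\anc(y)\setminus\anc(x)$ and derive a protecting configuration. That route is messier, so the covered-edge argument is the one I would use.
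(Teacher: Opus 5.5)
Your proof is correct, but it takes a different route from the paper's. The paper works through Meek's orientation rules. It first shows that under $\s_\rho(\G)=1$ every non-root node $y$ is the center of an unshielded collider. It then shows that every parent $c\in\C_y$ that is not an endpoint of such a collider has a child in $\D_y$. From this it concludes that each $c\to y$ is oriented by Rule 3, so every edge of the skeleton is compelled. You instead invoke Chickering's covered-edge characterization, which turns the statement into a single local check. Your computation $\anc(y)=\anc(x)\cup\{y\}$, hence $\rel(y)=\rel(x)\cup\desc(y)=\rel(x)$, is right.

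Both arguments rest on the same fact: $x\in\pa(y)$ together with $\pa(y)\subseteq\anc(x)$ forces $\rel(x)=\rel(y)$. The paper applies it twice, to the sink of the clique $\pa(y)$ and to the sink of the clique $\C_y$. You apply it once, to a covered edge.

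Your route buys brevity and tightness. It avoids the paper's terse final step, which invokes Rule 3 without spelling out the cases where some edges between $c$ and $\D_y$ are already directed (there Rules 1 and 2 take over). That step also ends with an informal remark that the same argument applies to every edge. The price is reliance on a heavier external theorem.

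The paper's route buys structural information your argument does not give: an abundance of v-structures, and the concrete mechanism by which the CPDAG becomes fully oriented. This is what links the theorem to the empirical performance of CPDAG learners such as FLOP. Finally, the fallback you sketch via Meek's rules is essentially the paper's own proof.
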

\noindent
The proof of \cref{thm:cpdag_dag} is based upon showing that all edges in the skeleton of a DAG $\G$ with $\s_\rho(\G)=1$ can be oriented by the orientation rules in \cite{meek1995orientation}; it is shown in \cref{app:proof_thm3}.
For completeness, we note that rel-sortability need not be high for the MEC to be small. For instance, the MEC of a fully connected DAG has a single member, yet such a DAG's sortability would trivially be $1/2$ since all pairs are connected.

%% file: chapters/time_DAGs.tex
\section{Time-unrolled DAGs as a possible alternative}\label{sec:time_series}

Our theoretical analysis and numerical analyses show that rel-sortability is high for ER and SF DAGs generated with common parameters, and that this can impact the difficulty of the causal discovery task.
In light of this finding, and since it is unclear why ER DAGs and SF DAGs should be generically realistic choices for causal DAGs, it may be that evaluations on synthetic data systematically misestimate the promise of causal discovery algorithms on real-world data \parencite[cf.][]{brouillard2025landscape}.
To address this problem, we propose turning to graph sampling schemes based on DAG models for time series as a possible alternative.
Note that any observation of a causal system is a time series assuming that causes precede effects \parencite{reisach2026time}, regardless of whether the observations are i.i.d.
Our proposal is rooted in recent real-world benchmarks and applications of causal discovery, in particular \cite{bang2023we}, \cite{gobler2024texttt}, and \cite{gamella2025causal}, all of which note and use the fact that real-world observations necessarily correspond to a quantity at a specific time point. 
Thus, existing time series sampling schemes such as \cite{cheng2023causaltime} and \textcite[][Section 6]{herman2025unitless} can easily 
be repurposed to generate i.i.d.\ observations by repeatedly and independently simulating short time series trajectories, instead of the usual practice of generating observations by splitting a single long time series into overlapping windows.
Thus, when using a potentially cyclic random graph as the basis for random DAGs, unrolling them in time \parencite[cf.][Section 1 in each]{mooij2011causal,hyttinen2012} may be a natural alternative to deleting edges.

\paragraph{Constructing time-unrolled DAGs and controlling rel-sortability.}
We show how a potentially cyclic graph can be unrolled in time, and state a condition to control rel-sortability in the resulting DAG. Our proposal is closely related to the ideas of a \textit{summary graph} and \textit{full-time graph} in the time series causality literature \parencites[cf.][]{peters2013causal}[][Section 1.3]{assaad2021time}.
Let $\G_S=(\V_S,\E_S)$ be a non-empty directed (possibly cyclic) graph with $p\geq2$ nodes, which we refer to as the summary graph in analogy to the convention in time series causality. We assume $\G_S$ contains a self-loop for every node, that is, $(v,v)\in\E_S$ for every $v\in\V_S$. 
For an integer $T>1$, define the (finite) time-unrolled DAG $\mathcal G_T=(\mathcal V_T,\mathcal E_T)$ with
\begin{align*}
    \mathcal V_T&=\{v^{(t)} : v\in\V_S,\ t\in\{1,\ldots,T\}\},\\
    \mathcal E_T &= \bigl\{(i^{(t)},j^{(t+1)}) \colon (i,j)\in\mathcal E_S,\ t\in\{1,\ldots,T-1\}\bigr\}.
\end{align*}
Random DAGs can thus be sampled by first drawing $\G_S$, choosing $T$, and then constructing $\G_T$.
\cref{fig:ts} provides an example of a summary graph and its time-unrolled DAG for $T=4$.
\begin{figure}[H]
    \centering
    \begin{subfigure}{.29\linewidth}
        \centering
        \begin{tikzpicture}[scale=0.8]
            \node (a) at (0,0) {$a$};
            \node (ap) at (-0.1,0.1) {$\phantom{a}$};
            \node (b) at (1.5,1.2) {$b$};
            \node (bp) at (1.4,1.3) {$\phantom{b}$};
            \node (c) at (3,0) {$c$};
            \draw[->] (a)--(b);
            \draw[->] (bp)--(ap);
            \draw[->] (b)--(c);
            \draw[->] (c)--(a);
            \draw[->] (a)edge[loop left](a);
            \draw[->] (b)edge[loop above](b);
            \draw[->] (c)edge[loop right](c);
        \end{tikzpicture}
        \caption{Summary graph.}
        \label{fig:summary_graph}
    \end{subfigure}
    \begin{subfigure}{.49\linewidth}
        \centering
        \begin{tikzpicture}[scale=0.6]
            \pgfmathsetmacro{\dzero}{0};
            \pgfmathsetmacro{\done}{-1};
            \pgfmathsetmacro{\dtwo}{-2};
            \node[] (a1) at (0,\dzero) {$a^{(1)}$};
            \node[] (a2) at (2.5,\dzero) {$a^{(2)}$};
            \node[] (a3) at (5,\dzero) {$a^{(3)}$};
            \node[] (a4) at (7.5,\dzero) {$a^{(4)}$};
            \node[] (b1) at (0,\done) {$b^{(1)}$};
            \node[] (b2) at (2.5,\done) {$b^{(2)}$};
            \node[] (b3) at (5,\done) {$b^{(3)}$};
            \node[] (b4) at (7.5,\done) {$b^{(4)}$};
            \node[] (c1) at (0,\dtwo) {$c^{(1)}$};
            \node[] (c2) at (2.5,\dtwo) {$c^{(2)}$};
            \node[] (c3) at (5,\dtwo) {$c^{(3)}$};
            \node[] (c4) at (7.5,\dtwo) {$c^{(4)}$};

            \draw[->] (a1)--(a2);
            \draw[->] (b1)--(b2);
            \draw[->] (c1)--(c2);
            \draw[->] (a2)--(a3);
            \draw[->] (b2)--(b3);
            \draw[->] (c2)--(c3);
            \draw[->] (a3)--(a4);
            \draw[->] (b3)--(b4);
            \draw[->] (c3)--(c4);
            \draw[->] (a1)--(b2);
            \draw[->] (b1)--(a2);
            \draw[->] (b1)--(c2);
            \draw[->] (c1)--(a2);
            \draw[->] (a2)--(b3);
            \draw[->] (b2)--(a3);
            \draw[->] (b2)--(c3);
            \draw[->] (c2)--(a3);
            \draw[->] (a3)--(b4);
            \draw[->] (b3)--(a4);
            \draw[->] (b3)--(c4);
            \draw[->] (c3)--(a4);
        \end{tikzpicture}
        \caption{Time-unrolled DAG.}
        \label{fig:ts_dag}
    \end{subfigure}
    \caption{A directed and cyclic summary graph and corresponding time-unrolled DAG for $T=4$.}
    \label{fig:ts}
\end{figure}
To see how rel-sortability can be controlled in time-unrolled DAGs, observe first that in \cref{fig:summary_graph}, there is a directed path between all node pairs. Note further that, as a consequence of this and the self-loops, all nodes with a time index $\geq3$ in \cref{fig:ts_dag} have all preceding nodes as ancestors, meaning they have the same number of relatives.
Based on this idea, \cref{thm:ts_dags} states a sufficient condition on the summary graph such that the resulting time series DAG tends to a rel-sortability of $1/2$ as $T$ goes to infinity.
The level of rel-sortability in a time-unrolled DAG can thus be controlled by choosing a summary graph that meets the condition in \cref{thm:ts_dags} and a sufficiently long time horizon.
\begin{theorem}
    \label{thm:ts_dags}
    For $i,j\in\V_S$, let $i\leadsto j$ denote the existence of a directed path from $i$ to $j$.
    It holds that 
    \begin{equation}
        \forall i,j\in\V_S,(i\leadsto j \implies j\leadsto i) \implies \lim_{T\to\infty}\s_\rho(G_T) = 1/2.
        \label{eq:implication}
    \end{equation}
\end{theorem}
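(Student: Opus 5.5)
The plan is to first use the hypothesis on $\G_S$ to split $\G_T$ into independent pieces. Then show that, inside each piece, every node from some fixed time onward has exactly the same set of relatives. Edges between such nodes contribute $\s_\rho(x,y)=1/2$. Only a bounded number of edges, independent of $T$, can differ, while $|\E_T|$ grows linearly in $T$.

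\emph{Step 1 (decomposition).} Suppose $\forall i,j\,(i\leadsto j\implies j\leadsto i)$. Then every edge $(i,j)\in\E_S$ joins two nodes of the same strongly connected component of $\G_S$. So $\G_S$ is a disjoint union of strongly connected components $C_1,\dots,C_m$ with no edges between them. Consequently $\G_T$ is a disjoint union of the unrolled graphs $(C_\ell)_T$. Relatives of a node in $(C_\ell)_T$ stay inside $(C_\ell)_T$, because ancestors and descendants cannot leave a connected component. Every edge of $\G_T$ has both endpoints in the same $(C_\ell)_T$, so it suffices to compare $\rho$ within each component.

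\emph{Step 2 (ancestry via distances).} Fix a component $C$ with $k=|C|$ nodes. Because every node has a self-loop, a walk of length exactly $t-s$ from $u$ to $v$ in $C$ exists if and only if $\operatorname{dist}_C(u,v)\le t-s$. Hence
\[
u^{(s)}\in\anc(v^{(t)}) \iff \operatorname{dist}_C(u,v)\le t-s .
\]
Strong connectivity gives $\operatorname{dist}_C(u,v)\le k-1$ for all $u,v\in C$. Take $t\ge k$. Then every $u^{(1)}$, $u\in C$, is an ancestor of $v^{(t)}$. The descendants of $\{u^{(1)}:u\in C\}$ are all of $C^{(1)}$, via the convention $a\in\desc(a)$. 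They also include every $w^{(s)}$ with $s\ge2$: the self-loop on $w$ gives the path $w^{(1)}\to w^{(2)}\to\cdots\to w^{(s)}$. So $\rel(v^{(t)})$ is the whole vertex set of $(C)_T$ for every $v\in C$ and every $t\ge k$.

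\emph{Step 3 (counting).} Take any edge $(x,y)\in\E_T$ whose source has time index at least $k_{\max}:=\max_\ell|C_\ell|\le p$. Both endpoints lie in the same component at times $\ge k$, so $\rho(x)=\rho(y)$ and $\s_\rho(x,y)=1/2$. The remaining edges have source time $t<p$, so there are at most $(p-1)|\E_S|$ of them, each contributing a value in $[0,1]$. Since $|\E_T|=(T-1)|\E_S|$, we get
\[
\bigl|\s_\rho(\G_T)-\tfrac12\bigr|\le \frac{(p-1)|\E_S|\cdot \tfrac12}{(T-1)|\E_S|}\xrightarrow[T\to\infty]{}0 .
\]
We have $|\E_S|\ge1$ because self-loops are assumed.

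\emph{Main obstacle.} The delicate step is Step 2: characterising ancestry in $\G_T$ correctly. Without self-loops, reachability would require walks of \emph{exact} length $t-s$, and periodicity could break the argument. The self-loop assumption removes this issue by making the reachable walk lengths upward closed, and I would state this carefully as a small lemma. Everything else is bookkeeping, including applying the convention $\P_{aa}=\{a\}$ at time $1$.
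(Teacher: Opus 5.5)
Your proof is correct and follows essentially the same route as the paper's. You reduce to strongly connected components, use the self-loops to pad walks so that from time index about $p$ onward every node of a component has the same relatives, and note that the at most $p|\E_S|$ remaining edges are a vanishing fraction of $|\E_T|=(T-1)|\E_S|$. The only difference is minor: you show directly that $\rel(v^{(t)})$ is the whole vertex set of the unrolled component, whereas the paper infers equal relatives from equal root sets via \cref{lem:ancestors}, and your direct computation is, if anything, the cleaner justification.
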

The proof is shown in \cref{app:proof_thm4}.
The condition in \cref{eq:implication} states that every weakly connected component of the directed summary graph is strongly connected. Since we assume self-loops at every node, the adjacency matrix of each component, viewed as the support of a Markov transition matrix, is irreducible and aperiodic.
Sampling a graph that meets this condition is not difficult -- for example, it is straightforwardly fulfilled by symmetry when sampling an \ERname graph and replacing each undirected edge by two directed ones in opposing directions. That said, a graph need not be symmetric to meet the condition, as can be seen for example by the summary graph in \cref{fig:summary_graph}.

%% file: chapters/discussion.tex
\section{Discussion}\label{sec:discussion}

\paragraph{Limitations.}
We acknowledge several limitations to our theoretical and experimental results.
Our lower bound in \cref{thm:lower_bound} only covers ER DAGs, and though our experiments suggest a similar trend for SF DAGs, we do not have a corresponding theoretical result. 
Regarding \cref{thm:cpdag_dag}, we note that it does not imply that highly rel-sortable DAGs have small MECs in general, and a result to this end would present a valuable extension.
In our experiments, we set the threshold for estimating the number of relatives based on the assumption of Gaussianity, matching the data generation. The practical performance may change for different significance values or when using nonparametric threshold selection.
In our causal discovery experiments we do not tune or search for optimal hyperparameters of any of the algorithms. 
In addition, all our experiments use linear models and though we use a range of different data generation regimes, we do so on the basis of parameters drawn from a particular set of distributions. Additional experiments or assumptions would be needed to assess the approximation of the number of relatives for a wider range of SCMs, in particular with nonlinear functions. Finally, we propose time-unrolled DAGs as a potential alternative to existing graph sampling schemes and show how their rel-sortability can be controlled. We do not analyze their properties beyond this, and further research is needed to judge their suitability in the context of causal discovery more comprehensively.

\paragraph{Opportunities for causal discovery.}
Our contribution introduces the number of relatives of a node as a new aspect in the importance of graph topology for causal discovery. 
The connection between sortability by the number of relatives and the size of the MEC outlined in our work may open the door for new theoretical identifiability results based on topological properties that restrict the size of the MEC.
Our work also presents many opportunities to improve the use of relatives for causal discovery. For example, it may be interesting to choose the threshold $\varepsilon$ in \cref{algo:est_rels} without relying on a parametric assumption, or choosing a different threshold adaptively for different variables. In addition, one could make use of the fact that the set of relatives of a given node must contain the set of relatives of all its ancestors, which we do not yet use in our algorithms.

\paragraph{Evaluating causal discovery algorithms on synthetic data.}
Evaluations on synthetic data serve as a controlled testbed for causal discovery algorithms.
We highlight that sampling random DAGs with high rel-sortability may be problematic when evaluating causal discovery algorithms, since it remains an open question to what extent such properties can be expected on real-world data. Hence, the currently dominating sampling regimes in the literature risk overfitting to a particular and potentially implausible graph class.
Our findings call for a systematic evaluation of which graph sampling scheme may be considered realistic, and suggest that a reevaluation of existing causal discovery algorithms may be needed.
Our result on chance-level rel-sortability in certain types of time-unrolled DAGs constitutes a first step in the direction of alternative graph sampling schemes. We consider further investigations into the topological properties of such DAGs an area of high interest and consequence for future research.
Prospectively, sampling data from independent trajectories of short time-unrolled DAGs also raises the question to what extent full or partial knowledge of the time order can and should be used for causal discovery on i.i.d.\ data.

%% file: chapters/conclusion.tex
\section{Conclusion}\label{sec:conclusion}

Our contribution provides new insight into the topological properties of common random DAG models in causal discovery. We show that the number of relatives, that is, other nodes reachable via open paths, tends to increase along the causal order. We prove a lower bound for ER DAGs that indicates high sortability by relatives for common density regimes in the literature, and perform a numerical analysis that suggests that a similar pattern holds for SF DAGs.
We propose a way of estimating the number of relatives empirically to obtain a causal order, and our causal discovery experiments illustrate that the property is highly relevant for causal discovery. 
This is potentially problematic, since this pattern may vary or be uninformative in real-world data.
In addition, we show that a strict increase in the number of relatives along the causal order leads to a singular MEC. We observe empirical evidence for small MECs throughout our simulations, which suggests that MEC discovery methods may have untapped potential for DAG discovery.
Inspired by recent examples in applied causal discovery, we propose unrolling a potentially cyclic random graph in time as a potential alternative for sampling random DAGs.
Overall, given the central role of synthetic data evaluations for causal discovery, our contribution renews questions about the practical applicability and real-world performance of existing algorithms.
At the same time, the concept of relatives provides a new tool for evaluating the topological properties of random DAGs that can help guide the development of alternative data generation regimes. In addition, our work shows that, when used deliberately, the relatives provide an effective way of utilizing topological properties of the underlying graph class for causal discovery.

%% file: chapters/appendix.tex
\section{Proof Details}\label{app:proof_details}

This section provides additional details on the proofs of \cref{thm:lower_bound}, \cref{thm:cpdag_dag}, and \cref{thm:ts_dags}.

\subsection{Proof of \cref{thm:lower_bound} (Additional Details)}

This section first states and proves \cref{lem:ancestors}, which is subsequently used in the proof for \cref{thm:lower_bound}.

\subsubsection{\cref{lem:ancestors}}

\begin{lemma}
    For any $x,y\in\V$,
    \begin{align}
        \rel(y)\setminus\rel(x)\neq\emptyset \iff \anc(y)\setminus\rel(x)\neq\emptyset.
        \label{eq:lemma2_1}
    \end{align}
    Thus, by \cref{thm:relatives}, if $x\in\anc(y)$, then
    \begin{align}
        \rel(x)\subsetneq\rel(y) \iff \anc(y)\setminus\rel(x)\neq\emptyset,
        \label{eq:lemma2_2}
    \end{align}
    and in particular
    \begin{align}
        \rel(x)\subsetneq\rel(y) \iff \roots(y)\setminus\rel(x)\neq\emptyset,
        \label{eq:lemma2_3}
    \end{align}
    and consequently
    \begin{align}
        \rel(x)\subsetneq\rel(y) \iff \roots(y)\setminus\roots(x)\neq\emptyset.
        \label{eq:lemma2_4}
    \end{align}
    \label{lem:ancestors}
\end{lemma}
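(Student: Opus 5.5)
The plan is to prove the four equivalences in sequence, each building on the previous one, using only the definitions of $\anc$, $\desc$, $\rel$, $\roots$ and \cref{thm:relatives}. Two facts are used repeatedly. First, $\anc(y)\subseteq\rel(y)$, since every $a\in\anc(y)$ satisfies $a\in\desc(a)$ by the convention $\P_{aa}=\{a\}$. Second, $\rel(x)$ is closed under taking descendants of its ancestors' descendants: if $z\in\rel(x)$, say $z\in\desc(a)$ with $a\in\anc(x)$, then $\desc(z)\subseteq\desc(a)\subseteq\rel(x)$ by transitivity of directed paths. In words, $\rel(x)$ is downward closed.

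For \cref{eq:lemma2_1}, the direction ``$\Leftarrow$'' is immediate from $\anc(y)\subseteq\rel(y)$. For ``$\Rightarrow$'', take $z\in\rel(y)\setminus\rel(x)$, so $z\in\desc(a)$ for some $a\in\anc(y)$. If $a\in\rel(x)$, downward closure would give $z\in\rel(x)$, a contradiction. Hence $a\in\anc(y)\setminus\rel(x)$.

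For \cref{eq:lemma2_2}, \cref{thm:relatives} gives $\rel(x)\subseteq\rel(y)$ when $x\in\anc(y)$. Strict inclusion is then equivalent to $\rel(y)\setminus\rel(x)\neq\emptyset$, and \cref{eq:lemma2_1} finishes the step.

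For \cref{eq:lemma2_3}, I will show that $\anc(y)\setminus\rel(x)\neq\emptyset$ if and only if $\roots(y)\setminus\rel(x)\neq\emptyset$.
\begin{itemize}
\item Direction ``$\Leftarrow$'' is trivial, since $\roots(y)\subseteq\anc(y)$.
\item For ``$\Rightarrow$'', take $a\in\anc(y)\setminus\rel(x)$. By finiteness and acyclicity, $a$ has an ancestor $r$ with $\anc(r)=\{r\}$, and then $r\in\roots(y)$. If $r\in\rel(x)$, downward closure would force $a\in\desc(r)\subseteq\rel(x)$, a contradiction. So $r\in\roots(y)\setminus\rel(x)$.
\end{itemize}

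For \cref{eq:lemma2_4}, it remains to show that for a root $r$ (with $\anc(r)=\{r\}$), $r\in\rel(x)$ if and only if $r\in\roots(x)$.
\begin{itemize}
\item If $r\in\roots(x)$, then $r\in\anc(x)\subseteq\rel(x)$.
\item Conversely, if $r\in\rel(x)$, then $r\in\desc(a)$ for some $a\in\anc(x)$, so $a\in\anc(r)=\{r\}$. Thus $a=r$, giving $r\in\anc(x)$ and hence $r\in\roots(x)$.
\end{itemize}
Therefore $\roots(y)\setminus\rel(x)=\roots(y)\setminus\roots(x)$, and the last equivalence follows from \cref{eq:lemma2_3}.

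The main obstacle is the downward-closure argument in the ``$\Rightarrow$'' directions. Beyond that, one must check that the convention $\P_{aa}=\{a\}$ places every node among its own ancestors and descendants. This makes $\anc(y)\subseteq\rel(y)$ and lets roots be their own ancestors.
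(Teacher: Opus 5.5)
Your proof is correct and follows essentially the same route as the paper: downward closure of $\rel(x)$ drives the forward directions of \cref{eq:lemma2_1} and \cref{eq:lemma2_3}, and \cref{thm:relatives} gives \cref{eq:lemma2_2}. The one place you go beyond the paper is \cref{eq:lemma2_4}, where you explicitly check that a root lies in $\rel(x)$ exactly when it lies in $\roots(x)$; the paper only asserts this step in one line.
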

\noindent
\begin{proof}[Proof of \cref{lem:ancestors}]
    We begin by noting that $z\in\rel(x)$ implies $\desc(z)\subseteq\rel(x)$, as follows directly from the definition. 
    Hence,
    \begin{align*}
        \rel(y)\setminus\rel(x) = \desc\left(\anc(y)\right) \setminus \rel(x)
        &= \bigcup_{z\in\anc(y)}\left(\desc(z)\setminus\rel(x)\right)\\
        &= \bigcup_{z\in\anc(y)\setminus\rel(x)}\left(\desc(z)\setminus\rel(x)\right).
    \end{align*}
    In view of this equality, $\rel(y)\setminus\rel(x)\neq\emptyset$ implies $\anc(y)\setminus\rel(x)\neq\emptyset$ because 
    the contrapositive holds trivially. Thus, the implication from left to right in \cref{eq:lemma2_1} holds. 
    
    Assume now that $\anc(y)\setminus\rel(x)\neq\emptyset$, and let $t\in\anc(y)\setminus\rel(x)$. Then $\{t\}\subseteq \desc(t)\setminus\rel(x)$, and therefore
    \begin{align*}
        \{t\} \subseteq \bigcup_{z\in\anc(y)\setminus\rel(x)}\left(\desc(z)\setminus\rel(x)\right)\neq\emptyset.
    \end{align*}
    Thus, the implication from right to left in \cref{eq:lemma2_1} holds.
    By \cref{thm:relatives}, we further know that if $x\in\anc(y)$ then $\rel(x)\subseteq\rel(y)$. Therefore
    \begin{align*}
        \rel(x)\subsetneq\rel(y) \iff \rel(y)\setminus\rel(x)\neq\emptyset \iff \anc(y)\setminus\rel(x)\neq\emptyset,
    \end{align*}
    proving \cref{eq:lemma2_2}.
    We now consider \cref{eq:lemma2_3}. By \cref{eq:lemma2_2}, it suffices to show that
    \begin{align*}  
        \anc(y)\setminus \rel(x)\neq\emptyset
        \quad\Longleftrightarrow\quad
        \roots(y)\setminus \rel(x)\neq\emptyset .
    \end{align*}
    The implication from right to left is immediate since \(\roots(y)\subseteq \anc(y)\).
    For the converse, let \(v\in \anc(y)\setminus \rel(x)\). Since every ancestor of \(y\) has a root ancestor in \(\roots(y)\), there exists \(r\in\roots(y)\) with \(v\in\desc(r)\). If \(r\in\rel(x)\), then \(\desc(r)\subseteq\rel(x)\), and hence \(v\in\rel(x)\), yielding a contradiction. Therefore \(r\notin\rel(x)\), so \(r\in\roots(y)\setminus\rel(x)\).
    This proves \cref{eq:lemma2_3}.

    Finally, \cref{eq:lemma2_4} follows straightforwardly from the fact that among the relatives of $x$, only roots of $x$ can also be roots of $y$.
    This concludes the proof.
\end{proof}

\subsubsection{Proof of \cref{thm:lower_bound}}\label{app:thm_2}

\begin{proof}
    \cref{lem:ancestors} allows us express the condition for a sortability of $1$ for $x,y\in\E$ as
    \begin{align*}
        \s_\rho(x,y)=1 \iff \roots(y)\setminus\rel(x)\neq\emptyset.
    \end{align*}
    Thus, the following is a sufficient condition for $\s_\rho(x,y)=1$:
    \begin{align*}
        E\coloneq\{a\in\roots(y)\cap \pa(y)\colon q_a\in(q_x,q_y)\}\neq\emptyset \implies \s_\rho(x,y)=1.
    \end{align*}
    By the independence of edges in ER graphs and the definition of $\s_\rho$ it holds that
    \begin{align*}
        \Pr(\s_\rho(x,y)=1\mid (x,y)\in \E) 
        &\geq \Pr(E).
    \end{align*}
    Consider an ER DAG obtained from an $ER(n,p)$ graph with $p=\varsigma/n$ by imposing a fixed topological order. Note that $\varsigma$ relates to $c$ via $\varsigma =2cn/(n-1)$ for fixed $n$.
    We have for any $v\in\V\setminus\{x,y\}$ with $r_v=nq_v$ that
    \begin{align*}
        \Pr(v\text{ is a root node}) = \left(1-\frac{\varsigma}{n}\right)^{r_v-1}.
    \end{align*}
    The probability of $v\in\pa(y)$ is $\varsigma/n$, independently of whether it is a root node. Thus,
    \begin{align*}
        \Pr(E) &\geq 1-\prod_{r_v=nq_x+1}^{nq_y-1}\left(1-\frac{\varsigma}{n}\left(1-\frac{\varsigma}{n}\right)^{r_v-1}\right) 
    \end{align*}
    In the node limit $n\to\infty$, we can partition the range $[r_x+1, r_y-1]$ into $T$ segments of equal length $\Delta q$, and lower bound the probability for any node within an interval to be a root node by that of the last node in the interval. This yields
    \begin{align*}
        \lim_{n\to\infty}\Pr(E)
        &\geq\lim_{n\to\infty}\left[
        1-\prod_{t=1}^{T}\left(1-\frac{\varsigma}{n}\left(1-\frac{\varsigma}{n}\right)^{n(q_x+t\Delta q)}\right)^{n\Delta q}\right]\\
        &= 1-\prod_{t=1}^T e^{-\varsigma \Delta qe^{-\varsigma(q_x+t\Delta q)}}\\
        &= 1-e^{-\varsigma \Delta q\sum_{t=1}^T e^{-\varsigma(q_x+t\Delta q)}}.
    \end{align*}
    Since $\Delta q=(q_y-q_x)/T$ in the node limit, taking the limit $T\to\infty$ turns the expression $\Delta q\sum_{t=1}^Te^{-\varsigma (q_x+t\Delta q)}$ into a Riemann sum of the function $u\mapsto e^{-\varsigma (q_x+u)}$ on the interval $[0,q_y-q_x]$. The statement results from taking the limit in $T$ and noting that $\varsigma\sim 2c$ in the node limit.
\end{proof}

\subsection{Proof of \cref{thm:cpdag_dag}}

\subsubsection{Auxiliary Results}

\begin{lemma}
    An edge $(x,y)\in\E$ in a DAG $\G=(\V,\E)$ does not form part of an unshielded collider $(x,y,\cdot)$ if and only if $x$ is adjacent to all other parents of $y$.
    \label{lem:undir_adjacency}
\end{lemma}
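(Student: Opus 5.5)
The statement is the purely graph-theoretic \cref{lem:undir_adjacency}: the edge $(x,y)$ is part of an unshielded collider $(x,y,z)$, i.e.\ $x\to y\leftarrow z$ with $x$ and $z$ non-adjacent, exactly when some other parent of $y$ is not adjacent to $x$. I will prove the contrapositive form, showing that the negations of both sides are equivalent, and handle each direction separately.

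First I fix the definition. An unshielded collider involving the edge $(x,y)$ is a triple $(x,y,z)$ with $z\neq x$ such that:
\begin{itemize}
\item $(x,y)\in\E$ and $(z,y)\in\E$;
\item neither $(x,z)$ nor $(z,x)$ lies in $\E$.
\end{itemize}
Note that $z\neq y$ holds automatically, since $\G$ has no self-loops by acyclicity.

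\textbf{Direction ($\Leftarrow$).} Suppose $x$ is adjacent to every $z\in\pa(y)\setminus\{x\}$. Any candidate triple $(x,y,z)$ needs $z\in\pa(y)\setminus\{x\}$. By assumption such a $z$ is adjacent to $x$, so the triple is shielded. Hence no unshielded collider $(x,y,\cdot)$ exists.

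\textbf{Direction ($\Rightarrow$).} I argue by contraposition. Suppose some $z\in\pa(y)\setminus\{x\}$ is not adjacent to $x$. Then $x\to y\leftarrow z$ with $x$ and $z$ non-adjacent, which is precisely an unshielded collider $(x,y,z)$. So if $(x,y)$ is in no unshielded collider, $x$ must be adjacent to all other parents of $y$.

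\textbf{Main obstacle.} There is essentially none beyond stating the definitions precisely. The only point needing care is that the paper writes $(x,y,\cdot)$ for a collider at $y$. I will state explicitly that the third element ranges over $\pa(y)\setminus\{x\}$, and that adjacency means an edge in either direction. With that in place, both directions are one-line checks.
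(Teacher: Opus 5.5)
Your proof is correct and matches the paper's argument: the paper proves the contrapositive of the forward direction in the same way, observing that a parent $z\neq x$ of $y$ that is not adjacent to $x$ yields the unshielded collider $(x,y,z)$ by definition. Your explicit treatment of the converse direction, and your stating that both edges must point into $y$, fill in what the paper leaves implicit.
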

\begin{proof}
    We show the contrapositive. 
    Assume there exists $z\in\pa(y)$ with $z\neq x$, $(x,z)\notin\E$, and $(z,x)\notin\E$.
    Then, by definition, $(x,y)$ forms part of the unshielded collider $(x, y, z)$. 
\end{proof}

\begin{corollary}
    Let $y\in\V$ in a DAG $\G=(\V,\E)$. Define $\U\subseteq\E$ as the set of edges that do not form part of an unshielded collider.
    It holds that $\C_y\coloneq\{x\colon (x,y)\in\U\}$ forms a clique in the skeleton of $\G$.
    \label{cor:undir_clique}
\end{corollary}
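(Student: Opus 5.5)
The corollary follows directly from \cref{lem:undir_adjacency}. The plan is to fix an arbitrary pair of distinct nodes $x,x'\in\C_y$ and show that they are adjacent in the skeleton of $\G$. If $\C_y$ has fewer than two elements, it is trivially a clique, so we may assume $|\C_y|\geq 2$.

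By definition of $\C_y$, both $(x,y)$ and $(x',y)$ belong to $\U$. Hence both edges lie in $\E$, so $x,x'\in\pa(y)$, and neither edge forms part of an unshielded collider. Apply \cref{lem:undir_adjacency} to the edge $(x,y)$: since it is not part of an unshielded collider $(x,y,\cdot)$, $x$ is adjacent to every other parent of $y$. Because $x'\in\pa(y)$ and $x'\neq x$, either $(x,x')\in\E$ or $(x',x)\in\E$. Therefore $x$ and $x'$ are adjacent in the skeleton.

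Since the pair was arbitrary, every two distinct members of $\C_y$ are adjacent, so $\C_y$ is a clique. We only need membership of one of the two edges in $\U$, so the argument in fact shows more: any $x$ with $(x,y)\in\U$ is adjacent to all of $\pa(y)\setminus\{x\}$, not merely to the other elements of $\C_y$.

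There is no real obstacle here. The only point to state carefully is that \cref{lem:undir_adjacency} is used in its forward direction: not being part of an unshielded collider implies adjacency to all other parents. One should also note that ``unshielded collider $(x,y,\cdot)$'' is read symmetrically in the two parents, so that membership of $(x,y)$ in $\U$ is exactly the hypothesis of the lemma.
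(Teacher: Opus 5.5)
Your proof is correct and is essentially the paper's own argument: the paper likewise applies the forward direction of \cref{lem:undir_adjacency} to conclude that every $x\in\C_y$ is adjacent to all other parents of $y$, and then uses $\C_y\subseteq\pa(y)$. Your pairwise formulation, and your note that an edge into $y$ can only be part of an unshielded collider centered at $y$, simply make those two steps explicit.
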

\begin{proof}
    By \cref{lem:undir_adjacency}, every $x\in\C_y$ is adjacent to all parents of $y$. The result follows since $\C_y\subseteq\pa(y)$.
\end{proof}

\begin{lemma}
    In every clique $\C\subseteq\V$ of a DAG $\G=(\V,\E)$, there exists $c\in\C$ such that $\anc(c)=\anc(\C)$.
    \label{lem:clique_ancestors}
\end{lemma}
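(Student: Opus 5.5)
The plan is to exploit the fact that a clique in a DAG is totally ordered by the DAG's own edges. First, I would note that for any two distinct nodes $u,v\in\C$, adjacency in the skeleton means $(u,v)\in\E$ or $(v,u)\in\E$. Hence either $u\in\anc(v)$ or $v\in\anc(u)$.

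Next, I would show that the relation ``$u\in\anc(v)$'' restricted to $\C$ is a strict total order (on distinct elements). Totality is the previous observation. Transitivity holds because concatenating directed paths gives a directed path. Antisymmetry for distinct nodes follows from acyclicity: if $u\in\anc(v)$ and $v\in\anc(u)$ with $u\neq v$, we would have a directed cycle through $u$.

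Since $\C$ is finite and nonempty, this total order has a maximal element $c$. That is, $u\in\anc(c)$ for every $u\in\C$; for $u=c$ this holds by the convention $\P_{cc}=\{c\}$. The case $\C=\emptyset$ is vacuous or excluded, since a clique is taken to be nonempty. I would argue the existence of $c$ either by induction on $|\C|$ or simply by picking the last element of $\C$ in any topological order of $\G$. The latter is cleanest: if $u\in\C$ precedes $c$ topologically and they are adjacent, the edge must be $u\to c$, so $u\in\pa(c)\subseteq\anc(c)$.

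Finally, I would conclude $\anc(c)=\anc(\C)$. The inclusion $\anc(c)\subseteq\anc(\C)$ is immediate since $c\in\C$ and $\anc(\C)$ is the union of $\anc(u)$ over $u\in\C$. For the reverse inclusion, take $w\in\anc(u)$ with $u\in\C$. Since $u\in\anc(c)$, concatenating a directed path from $w$ to $u$ with one from $u$ to $c$ gives $w\in\anc(c)$, handling the trivial-path conventions when $w=u$ or $u=c$.

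There is no real obstacle here. The only care point is using the topological-order choice of $c$, so that every clique member is a parent of $c$, which avoids arguing transitivity at all.
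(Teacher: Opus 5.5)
Your proposal is correct and matches the paper's argument. The paper likewise takes the sink $c$ of the fully connected sub-DAG induced by $\C$ (your last clique element in a topological order), so that $\C\setminus\{c\}\subseteq\pa(c)$, and concludes that $c$ inherits the ancestors of every clique member; your detour through the strict total order is unnecessary once $c$ is chosen topologically, as you note yourself.
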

\begin{proof}
    Note that the subgraph $\G_\C\subseteq\G$ with nodes $\C$ and all edges between members of $\C$ is a fully connected DAG.
    By definition, in the fully connected DAG $\G_\C$ there exists a sink, that is, $c\in\C$ such that $\pa(c)=\C\setminus\{c\}$.
    Hence, $c$ inherits the ancestors of all nodes in $\C$ and the statement follows.
\end{proof}

\begin{proposition}
    Let $\G=(\V,\E)$ be a DAG with $\s_\rho(\G)=1$. 
    Every node $y\in\V$ that is not a root node is the center of at least one unshielded collider.
    \label{prop:collider_center}
\end{proposition}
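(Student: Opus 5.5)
We argue by contradiction. Suppose $y\in\V$ is not a root node and is not the center of any unshielded collider. Then every edge into $y$ belongs to $\U$, the set of edges not forming part of an unshielded collider, so $\C_y=\pa(y)$. Since $y$ is not a root, $\pa(y)\neq\emptyset$.

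By \cref{cor:undir_clique}, $\pa(y)$ forms a clique in the skeleton of $\G$. \cref{lem:clique_ancestors} then gives some $c\in\pa(y)$ with $\anc(c)=\anc(\pa(y))$.

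Next we compare ancestor sets. Every ancestor of $y$ other than $y$ itself is an ancestor of some parent of $y$, so
\begin{equation*}
\anc(y)=\anc(\pa(y))\cup\{y\}=\anc(c)\cup\{y\}.
\end{equation*}
Taking descendants of both sides,
\begin{equation*}
\rel(y)=\desc(\anc(y))=\desc(\anc(c))\cup\desc(y)=\rel(c)\cup\desc(y).
\end{equation*}
Because $c\in\anc(y)$, we have $\desc(y)\subseteq\desc(c)\subseteq\rel(c)$. Hence $\rel(y)=\rel(c)$, so $\rho(c)=\rho(y)$.

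Since $(c,y)\in\E$, this tie gives $\s_\rho(c,y)=1/2$. By \cref{thm:relatives}, every edge has $\s_\rho\geq 1/2$, so one edge with value $1/2$ forces $\s_\rho(\G)<1$. This contradicts the assumption $\s_\rho(\G)=1$.

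As a cross-check, \cref{lem:ancestors} gives the same conclusion: $\roots(y)=\roots(c)$, so \cref{eq:lemma2_4} yields $\rel(c)\not\subsetneq\rel(y)$.

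The step needing most care is the identity $\anc(y)=\anc(\pa(y))\cup\{y\}$, together with the convention that $\anc$ and $\desc$ are reflexive. This convention is what makes $\anc(c)$ contain $c$ and every other parent, and it is what lets $y$'s own descendants be absorbed into $\rel(c)$. Everything else is a direct application of earlier results.
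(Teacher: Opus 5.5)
Your proof is correct and follows the paper's argument: it argues by contrapositive through \cref{cor:undir_clique} and \cref{lem:clique_ancestors}, and your explicit computation $\rel(y)=\rel(c)\cup\desc(y)=\rel(c)$ spells out the step the paper covers with ``by the definition of relatives''. One small slip: a single tied edge forces $\s_\rho(\G)<1$ because every term satisfies $\s_\rho(x,y)\leq 1$, not because of the lower bound $\s_\rho(x,y)\geq 1/2$ from \cref{thm:relatives}.
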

\begin{proof}
    We show the statement via the contrapositive.
    Recall that $\s_\rho(\G)=1$ if and only if $|\rel(a)|<|\rel(b)|$ for any $(a,b)\in\E$.
    Assume there exists a non-root $y\in\V$ that is not the center of any unshielded collider. Since $y$ is a non-root it has at least one parent, and since it is not the center of any unshielded collider, its parents form a clique in the skeleton of $\G$ by \cref{cor:undir_clique}. By \cref{lem:clique_ancestors}, there exists $x\in\pa(y)$ with $\anc(x)=\anc(\pa(y))$.
    By the definition of relatives, it follows that $\rel(x)=\rel(y)$, hence $\s_\rho(\G)\neq1$ and the statement follows. 
\end{proof}

\begin{proposition}
    Let $\G=(\V,\E)$ be a DAG with $\s_\rho(\G)=1$. Let $y\in\V$ be any non-root node.
    Define $\U\subseteq\E$ as the set of edges that do not form part of an unshielded collider.
    Define further $\C_y\coloneq\{x\colon (x,y)\in\U\}$, the set of all parents of $y$ that do not form endpoints of unshielded colliders centered in $y$. Denote as $\D_y\coloneq\pa(y)\setminus\C_y$ the set of all parents that do form endpoints of unshielded colliders centered in $y$.
    For every $c\in\C_y$, there exists $d\in\D_y$ such that $(c,d)\in\E$, meaning $d\in\ch(c)$.
    \label{prop:rule_3_works}
\end{proposition}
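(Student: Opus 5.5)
The plan is a proof by contradiction. Fix a non-root $y$ and some $c\in\C_y$, and suppose that no $d\in\D_y$ satisfies $(c,d)\in\E$. From this I will build a parent $s$ of $y$ with $\rel(s)=\rel(y)$. That contradicts $\s_\rho(\G)=1$, which requires $|\rel(a)|<|\rel(b)|$ for every $(a,b)\in\E$. This is the same mechanism as in the proof of \cref{prop:collider_center}; the hypothesis on $c$ is what lets the argument reach the parents in $\D_y$ as well.

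\textbf{Step 1.} By \cref{lem:undir_adjacency}, $c$ is adjacent to every other parent of $y$, so in particular to every $d\in\D_y$. Under the contrary assumption no such edge points from $c$ to $d$, so each of them is oriented $d\to c$. Hence $\D_y\subseteq\pa(c)$, and therefore $\anc(\D_y)\subseteq\anc(c)$.

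\textbf{Step 2.} By \cref{cor:undir_clique}, $\C_y$ is a clique in the skeleton and contains $c$. \cref{lem:clique_ancestors} then gives some $s\in\C_y$ with $\anc(s)=\anc(\C_y)$, so $\anc(s)\supseteq\anc(c)\supseteq\anc(\D_y)$. Since $\pa(y)=\C_y\cup\D_y$, this yields $\anc(s)=\anc(\pa(y))$.

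\textbf{Step 3.} Now $\anc(y)=\anc(\pa(y))\cup\{y\}=\anc(s)\cup\{y\}$. The definition of relatives gives
\begin{align*}
\rel(y)=\desc(\anc(s))\cup\desc(y)=\rel(s),
\end{align*}
because $y\in\desc(s)$ implies $\desc(y)\subseteq\desc(s)\subseteq\rel(s)$. As $(s,y)\in\E$, this gives $\s_\rho(s,y)=\tfrac12$, so $\s_\rho(\G)\neq1$, a contradiction. Hence some $d\in\D_y$ satisfies $(c,d)\in\E$. This also shows $\D_y\neq\emptyset$, consistent with \cref{prop:collider_center}: if $\D_y$ were empty, the contrary assumption would hold vacuously and the same contradiction would follow.

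The step needing most care is Step 1. It relies on \cref{lem:undir_adjacency} giving adjacency between $c$ and every member of $\D_y$, not only those in $\C_y$. That follows because $\D_y\subseteq\pa(y)\setminus\{c\}$. Once this is in place, the rest reduces to the clique-sink argument already used for \cref{prop:collider_center}.
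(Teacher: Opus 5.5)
Your proof is correct and follows the paper's argument essentially step for step. You assume some $c\in\C_y$ has no child in $\D_y$, use \cref{lem:undir_adjacency} to get $\D_y\subseteq\pa(c)$, take the sink $s$ of the clique $\C_y$ via \cref{cor:undir_clique} and \cref{lem:clique_ancestors} so that $\pa(y)\subseteq\anc(s)$, and conclude $\rel(s)=\rel(y)$ for the edge $(s,y)$; your Step~3 only spells out the computation $\rel(y)=\desc(\anc(s))\cup\desc(y)=\rel(s)$ that the paper summarizes as ``by the definition of relatives.''
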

\begin{proof}
    We show the statement via the contrapositive.
    Recall that $\s_\rho(\G)=1$ if and only if $|\rel(a)|<|\rel(b)|$ for any $(a,b)\in\E$.
    Assume there exists $c\in\C_y$ that has no $d\in D_y$ as a child.
    Since any $c\in\C_y$ is adjacent to all other parents of $y$ by \cref{lem:undir_adjacency}, this means that $\D_y\subseteq\pa(c)$.
    By \cref{cor:undir_clique}, the nodes in $\C_y$ form a clique in the skeleton of $\G$. Thus, by \cref{lem:clique_ancestors} there exists $x\in\C_y$ with $\anc(x)=\anc(\C_y)$. Since $\D_y\subseteq\pa(c)$, it follows that $\D_y\cup\C_y=\pa(y)\subseteq\anc(x)$. By the definition of relatives it follows that $\rel(x)=\rel(y)$, hence $\s_\rho(\G)\neq 1$ and the statement follows. 
\end{proof}

\subsubsection{Proof of the theorem}\label{app:proof_thm3}
\begin{proof}
    To prove the statement, we show that all edges in the skeleton of a DAG $\G$ with $\s_\rho(\G)=1$ can be oriented by the orientation rules in \cite{meek1995orientation}.
    Recall that three nodes $a, b, c$ form an unshielded collider $(a,b,c)$ with center $b$ if $a$ is adjacent to $b$, $b$ is adjacent to $c$, and $a$ is not adjacent to $c$ \parencite[][Section 2.1.1 S1 and S2]{meek1995orientation}.
    All edges of $\G$ that form part of an unshielded collider can be oriented.
    We define $\U\subseteq\E$ to be the set of edges that cannot be oriented through unshielded colliders.
    Let $y\in\V$ be an arbitrary non-root node. 
    We define the set $\C_y\coloneq\{x\colon (x,y)\in\U\}$ of parents of $y$ that do not form endpoints of a collider centered in $y$, and show that all edges from nodes in $\C_y$ to $y$ can be oriented.
    We further define the set $\D_y\coloneq\pa(y)\setminus\C_y$.
    By \cref{prop:collider_center}, $y$ is the center of at least one unshielded collider, meaning $|\D_y|\geq2$.
    By \cref{prop:rule_3_works}, for any $c\in\C_y$ there exists $d\in\D_y$ such that $(c,d)\in\E$, meaning $c$ is a parent of $d$. 
    Since $\D_y$ has at least two members and $c$ is a parent of one, it is either a parent or a child of another since it is adjacent to all parents of $y$ by \cref{lem:undir_adjacency}. Thus, for any $c\in\C_y$, $(c,y)$ can be oriented by Rule 3 in \textcite[][Section 2.1.2]{meek1995orientation}.
    Since every edge ends in some non-root node for which the same argument applies, all edges in $\U$ can be oriented and the statement follows. 
\end{proof}

\subsection{Proof of \cref{thm:ts_dags}}\label{app:proof_thm4}

\begin{proof}
    We assume for simplicity that the skeleton of $\G_S$ has a single connected component; otherwise the proof follows straightforwardly by applying the same logic to every connected component. Note that the maximal simple path length in $\G_S$ is at most $p-1$. Since $\G_S$ contains all self-loops, any directed path in $\G_S$ can be padded to arbitrary larger length. Hence, if the left-hand side of Equation~(6) holds, then for all $t>p$, all nodes $i^{(t)}\in\V_T$ have the same relatives by \cref{eq:lemma2_4} in \cref{lem:ancestors}. Thus, every edge $(i^{(t)},j^{(t+1)})\in\E_T$ with $t>p$ contributes $s_\rho(i^{(t)},j^{(t+1)})=1/2$. The number of edges in $\G_T$ with $t\leq p$ for the tail node is at most $p|\E_S|$, whereas $|\E_T|=(T-1)|\E_S|$. Hence the fraction of edges with $t\leq p$ tends to zero as $T\to\infty$, and the statement follows.
\end{proof}

\section{Some Random DAG Models used in the Causal Discovery Literature}\label{app:graphs_lit}

\cref{tab:literature} below gives an overview of some of the random DAG models used in the causal discovery literature.
Note that this table is meant to illustrate what kind of settings are used, but does not constitute an exhaustive summary of the literature or indeed every setting used in the works listed here. A complete overview would go beyond the scope of this work, and many works use a range of similar parameters for additional/ablation experiments, e.g.\ in the supplementary material.
Some works normalize the edge probability by $n-1$, which we also neglect here for simplicity.
Though we do not claim the list shown here to be representative of the whole literature, we do consider it useful in conveying the scale of synthetic data evaluations likely to be affected by high rel-sortability.

\begin{table}[H]
    \centering
    \small
    \setlength{\tabcolsep}{3pt}  %
    \begin{tabular}{lp{3cm}ll}
        \textbf{Work} & $n$ & $c$ for \textbf{ER}$\left(n, \frac{2c}{n-1}\right)$ & $c$ for \textbf{SF}$(n,c)$ \\[.2em]
        \hline
        \textcite[][Simulation Tests]{spirtes1995learning} & $\{10\}$ &  $\{1,1.5,2\}$ & - \\
        \textcite[][Section 4.1]{kalisch2007estimating} & $\{7,15,40,70,100\}$ & $\{1,2.5\}$ & - \\
        \textcite[][Section 4]{shimizu2011directlingam} & $\{10,20,50,100\}$ & $\{1,2.5\}$ & - \\
        \textcite[][Section 5.1]{peters2014causal} & $\{4,15\}$ & $\{1\}$ & - \\
        \textcite[][Section 6]{buhlmann2014cam} & $\{10,100\}$ & $\{1,4\}$ & - \\
        \textcite[][Section 7]{hyttinen2014constraint} & $\{6\}$ & $\{1\}$ & - \\
        \textcite[][Section 3.1]{peters2015structural} & $\{5, 20\}$ & $\{0.75\}$ & - \\
        \textcite[][Section 5]{magliacane2016ancestral} & $\{6,7,8\}$ & $\{1\}$ & - \\
        \textcite[][Section 6.1]{huang2018generalized} & $\{10\}$ & $\{1,1.5,2,2.5,3\}$ & -\\
        \textcite[][Section D.1]{zheng2018dags} & $\{20,1000\}$ & $\{1,2,4\}$ & $\{4\}$\\
        \textcite[][Section 4.1]{yu2019dag} & $\{10,20,50,100\}$ & $\{1.5\}$ & -\\
        \textcite[][Section A.6]{lachapelle2019gradient} & $\{10,20,50,100\}$ & $\{1,4\}$ & $\{1,4\}$ \\
        \textcite[][Section 5]{ng2020role} & $\{10,20,50,100\}$ & $\{1,2,4\}$ & $\{1,2,4\}$\\
        \textcite[][Section 4]{brouillard2020differentiable} & $\{10,20,100\}$ & $\{1,4\}$ & -\\
        \textcite[][Sections 4,5]{viinikka2020towards} & $\{20,50\}$ & $\{2\}$ & - \\
        \textcite[][Section 6.1]{lorch2021dibs} & $\{20,50\}$ & $\{2\}$ & $\{2\}$ \\
        \textcite[][Section 6.2]{deleu2022bayesian} & $\{20,50\}$ & $\{2\}$ & - \\
        \textcite[][Section 5]{bello2022dagma} & $\{20,100,200,1000\}$ & $\{4\}$ & $\{4\}$ \\
        \textcite[][Section 4.1, A]{rolland2022score} & $\{10,20,50\}$ & $\{1,4\}$ & $\{1,4\}$ \\
        \textcite[][Sections 6.1]{annadani2023bayesdag} & $\{5,30,50\}$ & $\{1\}$ & $\{1\}$ \\
        \textcite[][Sections 4, C]{montagna2023scalable} & $\{10,20,50,100,\newline200,500,1000\}$ & $\{1,4\}$ & $\{1,4\}$\\
        \textcite[][Sections 3,4,5]{ng2024structure} & $\{15,50\}$ & $\{1,2,4\}$ & - \\
        \textcite[][Sections 5, E]{ormaniec2024standardizing} & $\{20, 60, 100, 140,\newline 180, 220\}$ & $\{2,4\}$ & $\{2,4\}$ \\
        \textcite[][Section 4.2, B]{dhir2025meta} & $\{20\}$ & $\{1,2,3\}$ & $\{1,2,3\}$ \\
        \textcite[][Section 5]{kim2025targeted} & $\{1000\}$ & $\{1,2,3\}$ & $\{1,2,3\}$ \\
        \textcite[][Section 5]{wienobst2025embracing} & $\{50\}$ & $\{4\}$ & $\{4\}$
    \end{tabular}
    \caption{A selection of random graph settings used for synthetic data experiments in the literature.}
    \label{tab:literature}
\end{table}

\section{Additional Experiments and Details}\label{app:add_exps}

\subsection{Software and licenses}
In all implementation and experiments, we use version $3.9.19$ of the Python programming language. For the experiments, we also rely on the packages listed in \cref{tab:software}.
\begin{table}[H]
\centering
\caption{Software packages used in our experiments.}
\label{tab:software}
\begin{tabular}{llll}
\toprule
Name & Version & Paper & License \\
\midrule
\href{https://numpy.org/}{numpy}
    & 1.26.4 & \cite{harris2020array} & modified BSD \\
\href{https://pandas.pydata.org/}{pandas}
    & 2.3.3 & \cite{reback2020pandas} & BSD 2-clause \\
\href{https://matplotlib.org/}{matplotlib}
    & 3.9.4 & \cite{Hunter:2007} & modified PSF \\
\href{https://seaborn.pydata.org/}{seaborn}
    & 0.13.2 & \cite{Waskom2021} & BSD \\
\href{https://networkx.org/}{networkx}
    & 3.2.1 & \cite{hagberg2008exploring} & BSD 3-clause \\
\href{https://igraph.org/}{igraph}
    & 1.0.0 & \cite{csardi2006igraph} & GNU General Public License \\
\href{https://pypi.org/project/daosim/}{daosim}
    & 0.0.5 & \cite{andrews2024better} & MIT \\
\href{https://pypi.org/project/UUMCdata/}{UUMC}
    & 0.0 & \cite{herman2025unitless} & GNU General Public License \\
\href{https://pypi.org/project/flopsearch/}{flopsearch}
    & 0.3.0 & \cite{wienobst2025embracing} & Mozilla Public License 2.0 \\
\href{https://dagma.readthedocs.io/en/latest/}{DAGMA}
    & 1.1.1 & \cite{bello2022dagma} & Apache \\
\href{https://pypi.org/project/CausalDisco/}{CausalDisco}
    & 0.2.9 & \cite{reisach2023scale} & BSD 3-Clause \\
\href{https://pypi.org/project/gadjid/}{gadjid}
    & 0.1.0 & \cite{henckel2024adjustment} & Mozilla Public License 2.0 \\
\bottomrule
\end{tabular}
\end{table}

\subsection{SF DAG results}\label{app:SF}

This section contains experiment results on SF DAGs corresponding to the ER DAG versions shown in \cref{sec:discovery} of the main text.
Note that there are some additional choices in SF sampling depending on the implementation used. We choose a sampling scheme where in-degree and out-degree increase attachment probability. 
Some other works, for example \cite{zheng2018dags}, choose an SF sampling scheme where only in-degree increases attachment probability, likely further increasing the prevalence of unshielded colliders, small MECs, and high rel-sortability.

\paragraph{Sortabilities across data-generating regimes.}
\cref{fig:schemes_SF} shows the SF DAG experiment corresponding to \cref{fig:schemes_ER} in \cref{sec:schemes} in the main text.
Compared to ER DAGs, we observe a similar trend in rel-sortability with an even closer alignment between oracle and empirical values than for ER DAGs, in particular for DaO.
Though var-sortability is around $1/2$ as for ER DAGs, in SF DAGs we observe very high values of $R^2$-sortability in sparser settings with a decline in denser ones. This is a reversal of the trend found in ER DAGs, and includes notably DaO and to a lesser extent UUMC, both of which have lower $R^2$-sortabilities for ER DAGs.
Overall, these results show that all data generation regimes are characterized by high rel-sortability and $R^2$-sortability for the density range between $c=1$ and $c=10$ that is most common in the literature. Since rel-sortability and $R^2$-sortability appear to decrease in $c$, inverse sorting may be interesting for high $c$.

\begin{figure}[H]
    \centering
    \begin{subfigure}{.24\linewidth}
        \centering
        \includegraphics[width=\linewidth]{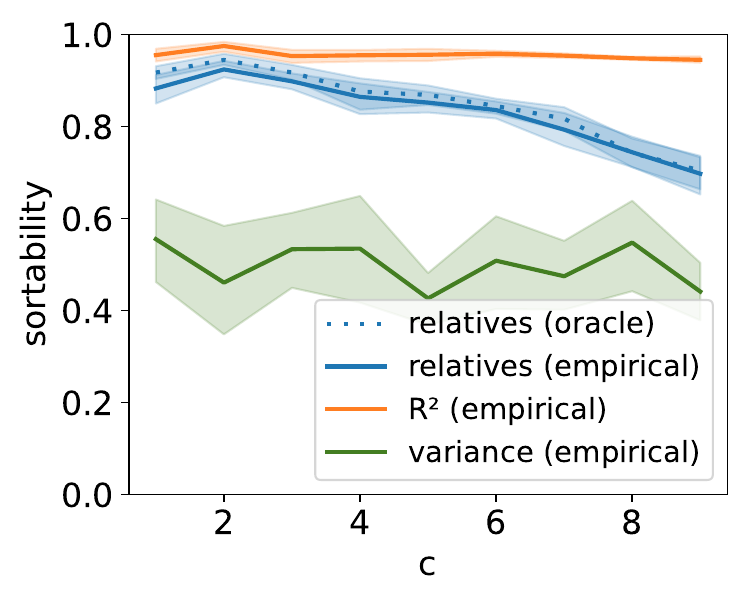}
        \caption{sSCM}
    \end{subfigure}
    \begin{subfigure}{.24\linewidth}
        \centering
        \includegraphics[width=\linewidth]{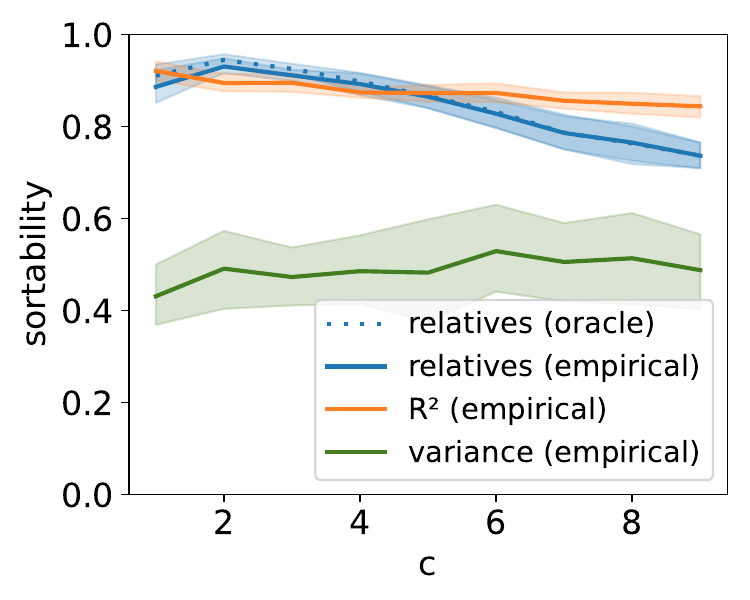}
        \caption{iSCM}
    \end{subfigure} 
    \begin{subfigure}{.24\linewidth}
        \centering
        \includegraphics[width=\linewidth]{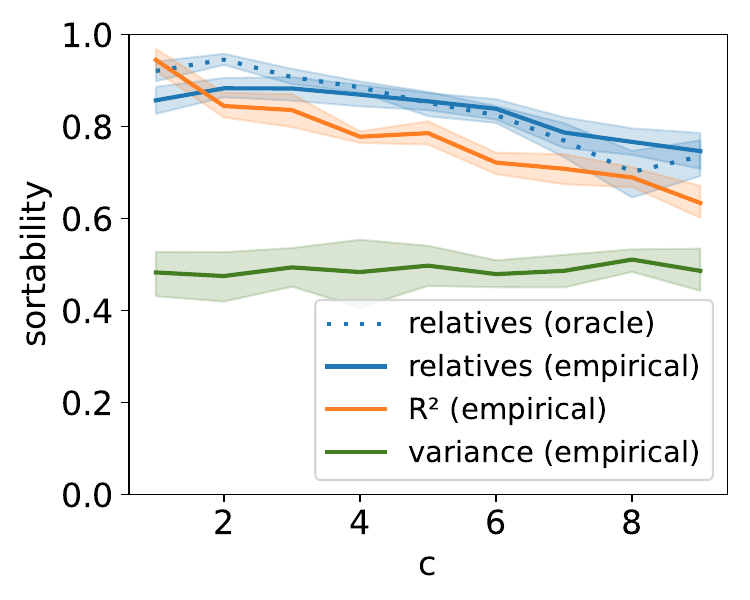}
        \caption{DaO}
    \end{subfigure}
    \begin{subfigure}{.24\linewidth}
        \centering
        \includegraphics[width=\linewidth]{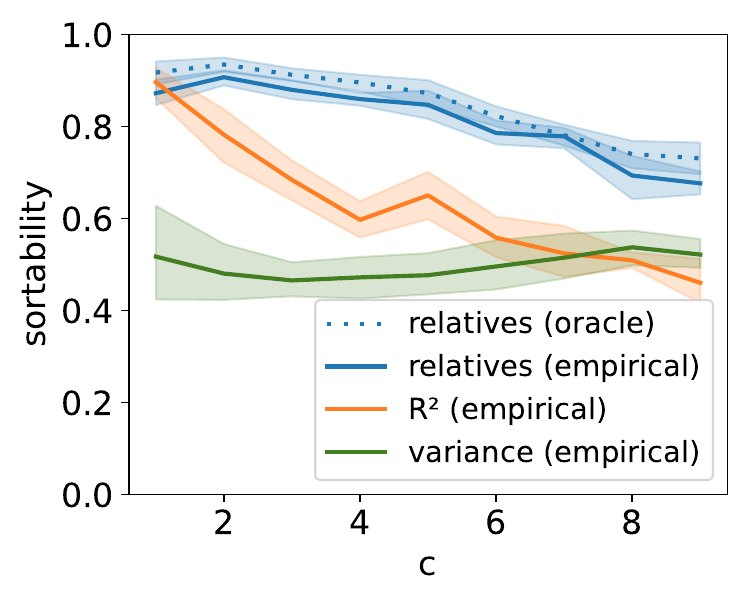}
        \caption{UUMC}
    \end{subfigure}
    \caption{Comparison of rel-sortability to other sortabilities in SF DAGs.}
    \label{fig:schemes_SF}
\end{figure}

\paragraph{Causal discovery.}
\cref{fig:SID_SF} below shows the SF DAG experiments complementing the ER DAG experiments shown in \cref{fig:SID_ER} of the main text. Compared to ER DAGs, we observe that rel-SortnRegress performs slightly worse compared to FLOP and $R^2$-SortnRegress on sSCM, but remains second only to FLOP in all remaining settings.
\begin{figure}[H]
    \centering
    \begin{subfigure}{.24\linewidth}        
        \centering
        \includegraphics[width=\linewidth]{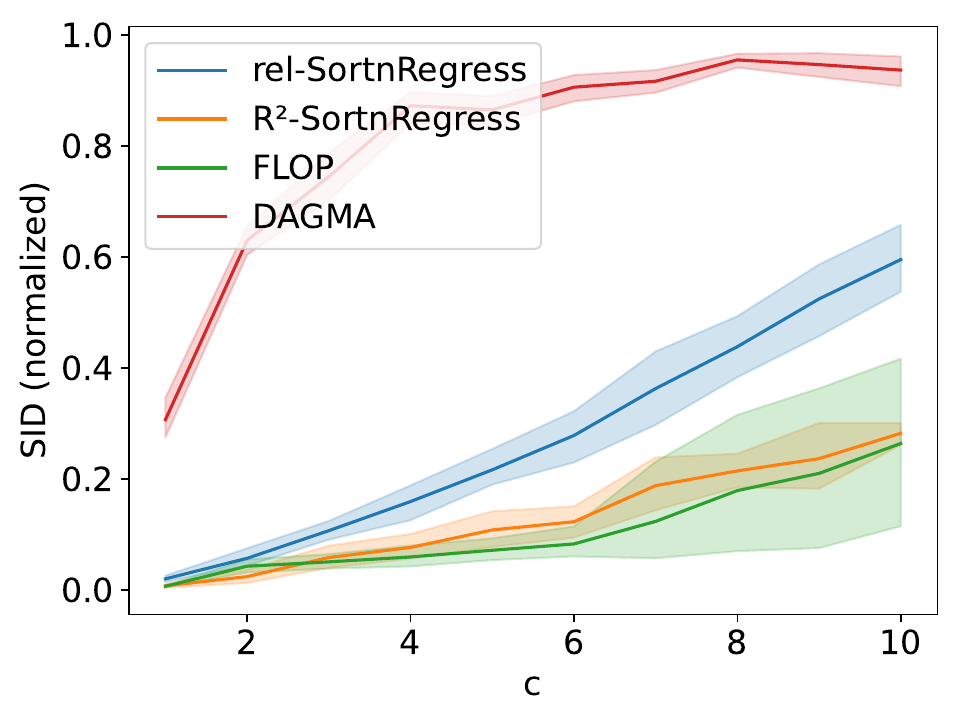}
        \caption{sSCM}
        \label{fig:cd_sSCM_SF}
    \end{subfigure}
    \begin{subfigure}{.24\linewidth}        
        \centering
        \includegraphics[width=\linewidth]{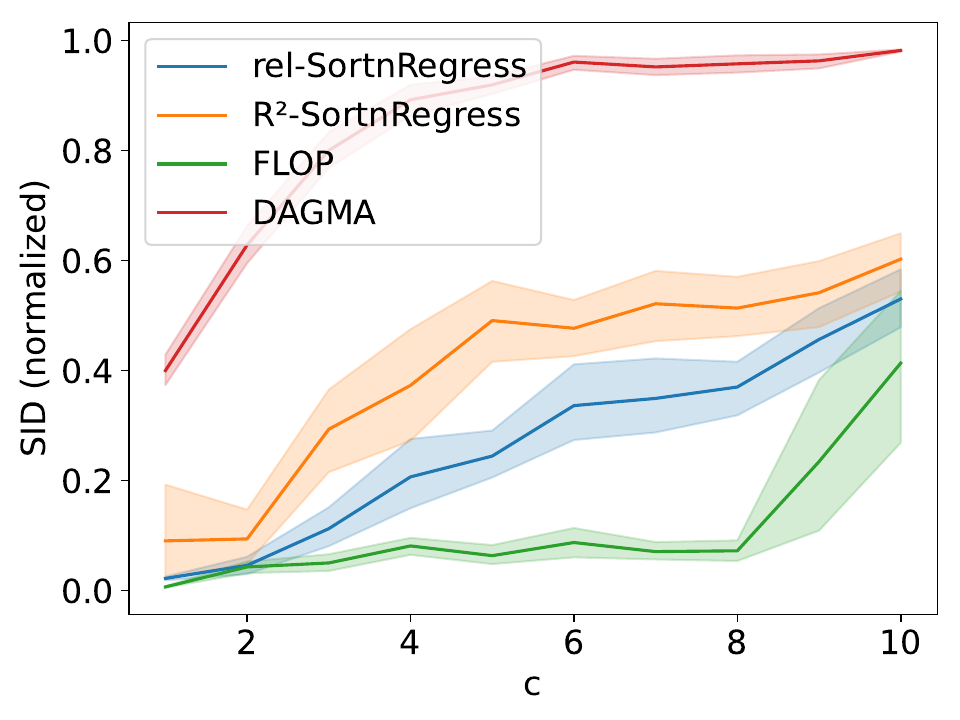}
        \caption{iSCM}
    \end{subfigure}
    \begin{subfigure}{.24\linewidth}        
        \centering
        \includegraphics[width=\linewidth]{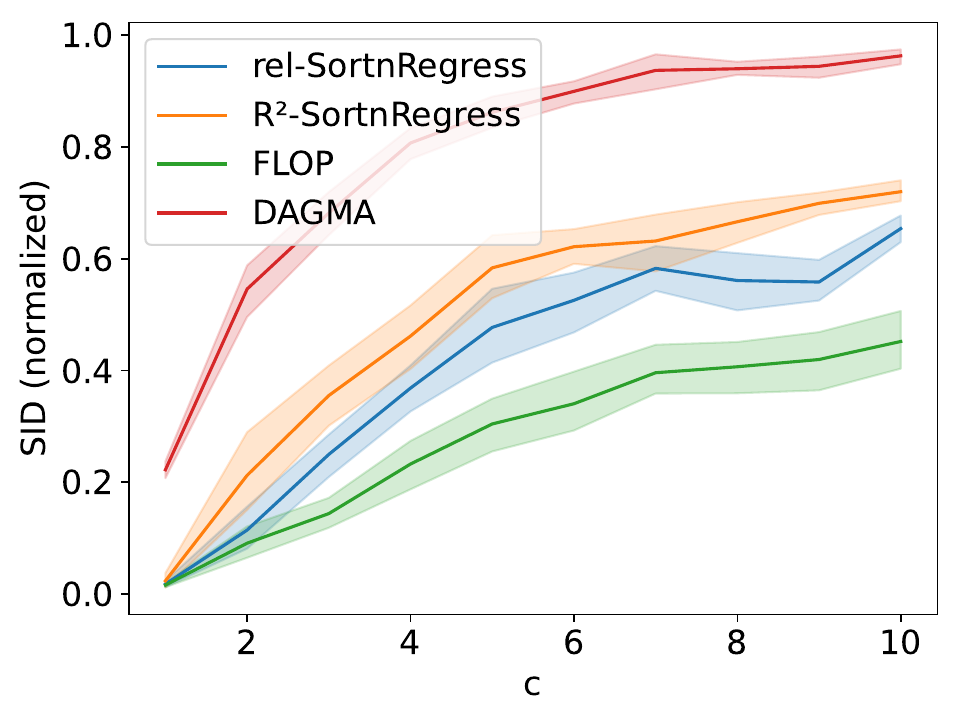}
        \caption{DaO}
    \end{subfigure}
    \begin{subfigure}{.24\linewidth}  
        \centering
        \includegraphics[width=\linewidth]{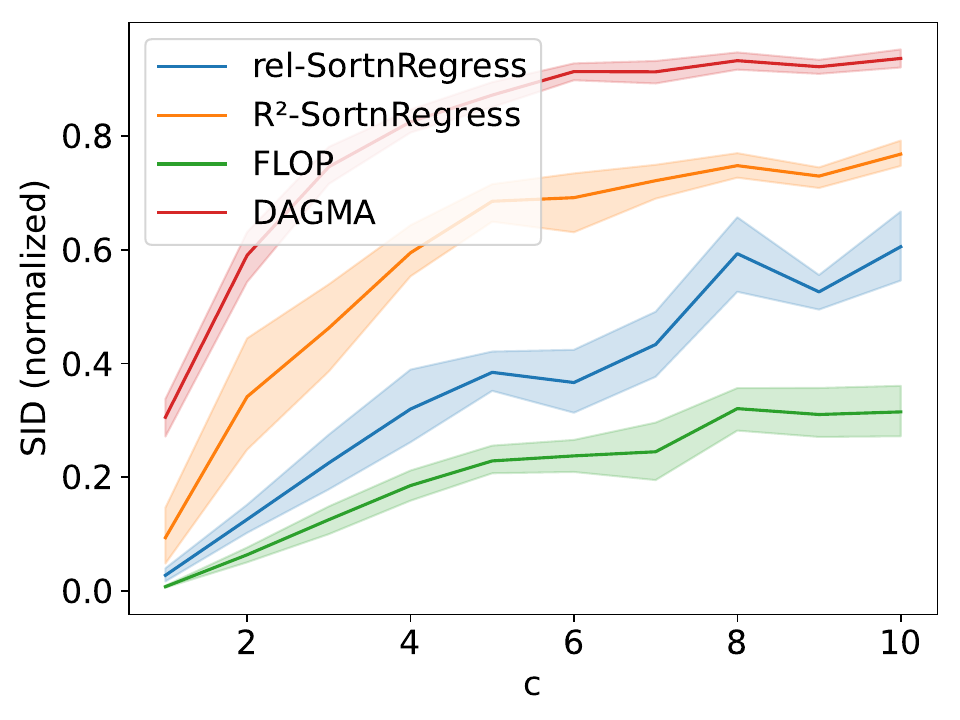}
        \caption{UUMC}
    \end{subfigure}
    \caption{Comparative SID (lower is better) performance of rel-SortnRegress on SF DAGs.}
    \label{fig:SID_SF}
\end{figure}

\subsection{Non-standardized SCMs}\label{app:not-standardized}
\cref{fig:non_std_schemes} compares the different sortabilities on data from non-standardized SCMs as discussed in \cref{sec:sortability}.
We observe that sortability by variance is extremely high for ER and SF DAGs, corroborating the findings in \cite{reisach2021beware}, and suggesting information in the variance as an explanation for the performance difference of the DAGMA algorithm between the standardized settings (sSCM) in \cref{fig:cd_sSCM_ER,fig:cd_sSCM_SF} and \cref{fig:nstdER,fig:nstdSF} below. The other algorithms show behavior very similar to that in the standardized settings.

\begin{center}
\begin{minipage}{0.48\textwidth}
    \begin{figure}[H]
        \centering
            \begin{subfigure}{.49\linewidth}
            \centering
            \includegraphics[width=\linewidth]{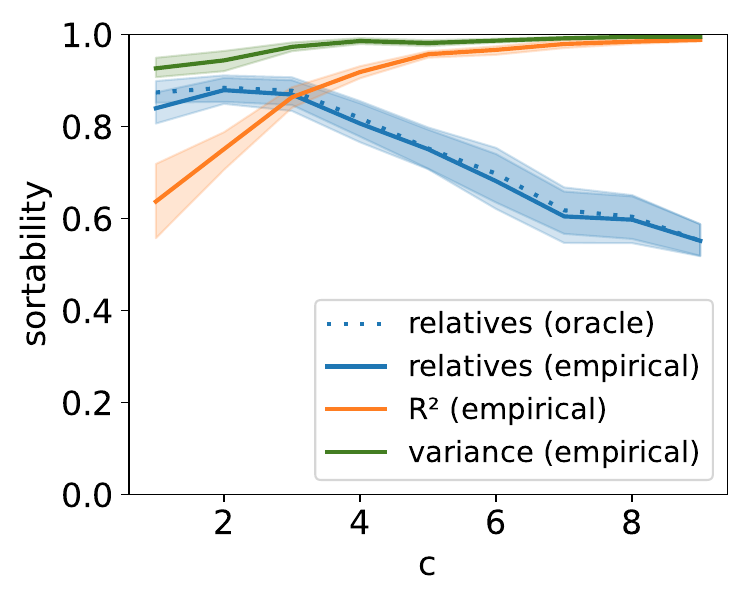}
            \caption{ER DAG}
        \end{subfigure}
            \centering
            \begin{subfigure}{.49\linewidth}
            \centering
            \includegraphics[width=\linewidth]{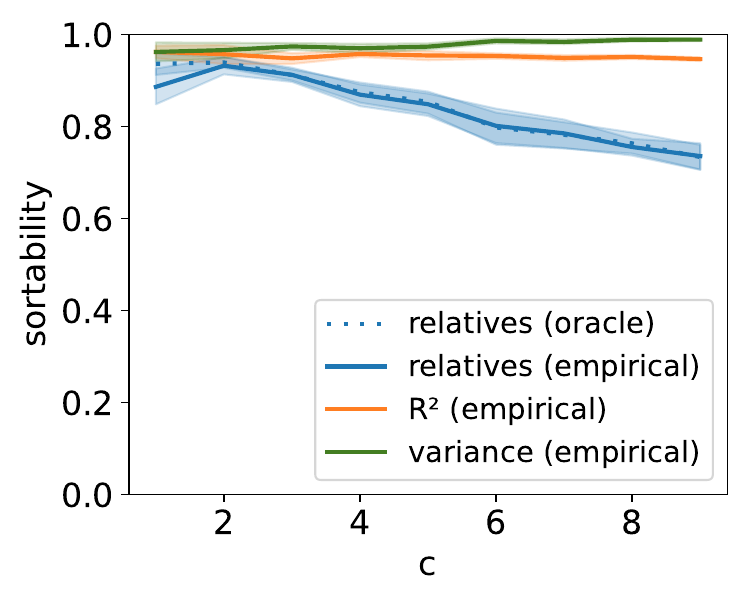}
            \caption{SF DAG}
        \end{subfigure}
        \caption{Sortabilities for non-standardized SCMs.}
        \label{fig:non_std_schemes}
    \end{figure}
\end{minipage}
\hfill
\begin{minipage}{0.48\textwidth}
    \begin{figure}[H]
        \centering
            \begin{subfigure}{.49\linewidth}
            \centering
            \includegraphics[width=\linewidth]{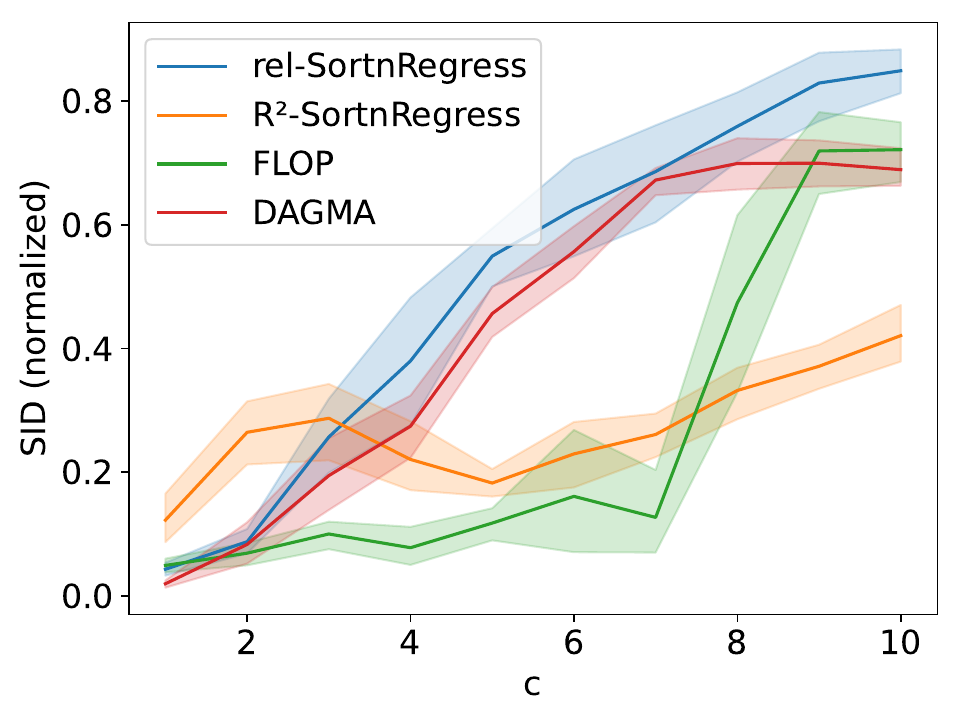}
            \caption{ER DAG}
            \label{fig:nstdER}
        \end{subfigure}
            \centering
            \begin{subfigure}{.49\linewidth}
            \centering
            \includegraphics[width=\linewidth]{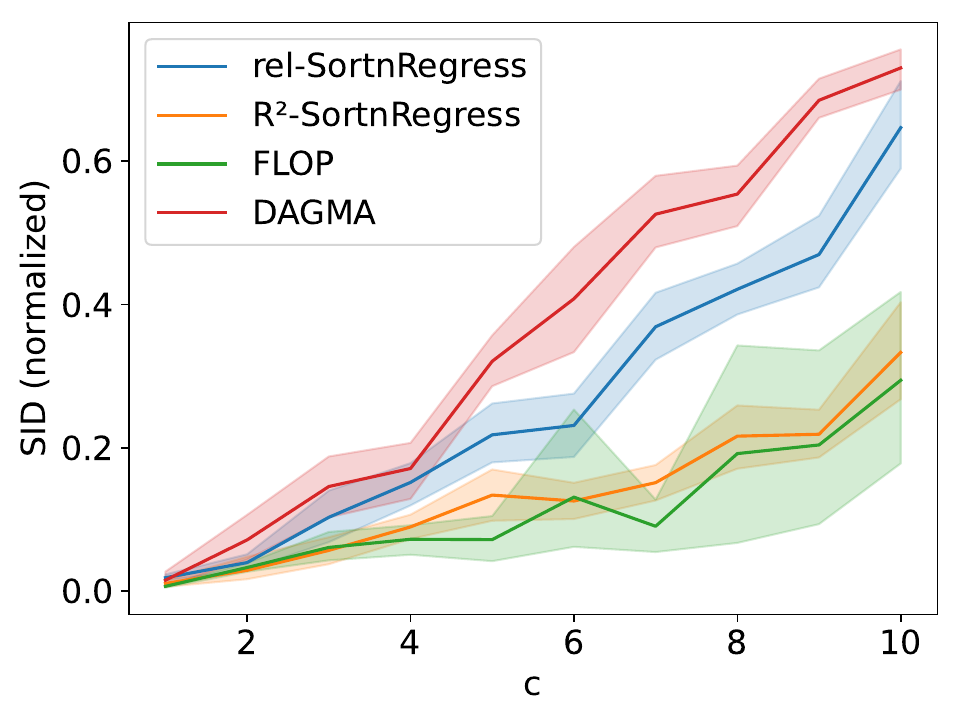}
            \caption{SF DAG}
            \label{fig:nstdSF}
        \end{subfigure}
        \label{fig:non_std_discovery}
        \caption{Causal discovery results for non-standardized SCMs.}
    \end{figure}
\end{minipage}
\end{center}